\def\ifundefined#1{\expandafter\ifx\csname#1\endcsname\relax}
\let\arXiv = 1
\newcommand{\ucmathlist}{%
    \def\alpha{\mathrm{A}}%
    \def\beta{\mathrm{B}}%
    \let\gamma=\Gamma
    \let\delta=\Delta
    \def\epsilon{\mathrm{E}}%
    \def\varepsilon{\mathrm{E}}%
    \def\zeta{\mathrm{Z}}%
    \def\eta{\mathrm{H}}%
    \let\theta=\Theta
    \let\vartheta=\Theta
    \def\iota{\mathrm{I}}%
    \def\kappa{\mathrm{K}}%
    \let\lambda=\Lambda
    \def\mu{\mathrm{M}}%
    \def\nu{\mathrm{N}}%
    \let\xi=\Xi
    \let\pi=\Pi
    \let\varpi=\Pi
    \def\rho{\mathrm{P}}%
    \def\varrho{\mathrm{P}}%
    \let\sigma=\Sigma
    \def\tau{\mathrm{T}}%
    \let\upsilon=\Upsilon
    \let\phi=\Phi
    \let\varphi=\Phi
    \def\chi{\mathrm{X}}%
    \let\psi=\Psi
    \let\omega=\Omega
}
\theoremstyle{plain}
    \newtheorem{theorem}{Theorem}
    \newtheorem{proposition}[theorem]{Proposition}
    \newtheorem{lemma}[theorem]{Lemma}
    \newtheorem{corollary}[theorem]{Corollary}
\theoremstyle{definition}
    \newtheorem{definition}{Definition}
\def\renewtheorem#1{%
    \expandafter\let\csname#1\endcsname\relax
    \expandafter\let\csname c@#1\endcsname\relax
    \gdef\renewtheorem@envname{#1}
    \renewtheorem@secpar
}
\def\renewtheorem@secpar{\@ifnextchar[{\renewtheorem@numberedlike}{\renewtheorem@nonumberedlike}}
\def\renewtheorem@numberedlike[#1]#2{\newtheorem{\renewtheorem@envname}[#1]{#2}}
\def\renewtheorem@nonumberedlike#1{  
    \def\renewtheorem@caption{#1}
    \edef\renewtheorem@nowithin{\noexpand\newtheorem{\renewtheorem@envname}{\renewtheorem@caption}}
    \renewtheorem@thirdpar
}
\def\renewtheorem@thirdpar{\@ifnextchar[{\renewtheorem@within}{\renewtheorem@nowithin}}
\def\renewtheorem@within[#1]{\renewtheorem@nowithin[#1]}
    \newcommand\blfootnote[1]{%
        \begingroup
        \renewcommand\thefootnote{}\footnote{#1}%
        \addtocounter{footnote}{-1}%
        \endgroup
    }
\def\munderbar#1{\underline{\sbox\tw@{$#1$}\dp\tw@\z@\box\tw@}}
\def\mthbA{\mathbf{\hat{\munderbar{\mathbf{A}}}}}
\def\mthbB{\mathbf{\hat{\munderbar{\mathbf{B}}}}}
\begin{document}



\def\theTitle{Distributed Linear-Quadratic Control with\\Graph Neural Networks}

\ifundefined{arXiv}
    \title{\theTitle\tnoteref{t1,t2}}
    \tnotetext[t1]{This work is supported by grants from ONR, NSF and AFOSR.}
    \tnotetext[t2]{Partial results have appeared in \cite{Gama21-LQRGNN}.}
\else
    \title{\theTitle}
\fi


\ifundefined{arXiv}
    \author[1]{Fer\hspace{0.015cm}nando Gama\corref{cor1}}
    \ead{fgama@rice.edu}
    \author[2]{Somayeh Sojoudi}
    \ead{sojoudi@berkeley.edu}
    \cortext[cor1]{Corresponding author}
    \address[1]{Department of Electrical and Computer Engineering, Rice University, Houston, TX, 77005 USA}
    \address[2]{Department of Electrical Engineering and Computer Sciences, University of California, Berkeley, CA, 94709 USA}
\else
    \author{Fer\hspace{0.015cm}nando Gama and Somayeh Sojoudi}
    \date{}
    \maketitle
\fi


\begin{abstract}
Controlling network systems has become a problem of paramount importance. In this paper, we consider a distributed linear-quadratic problem and propose the use of graph neural networks (GNNs) to parametrize and design a distributed controller for network systems. GNNs exhibit many desirable properties, such as being naturally distributed and scalable. We cast the distributed linear-quadratic problem as a self-supervised learning problem, which is then used to train the GNN-based controllers. We also obtain sufficient conditions for the resulting closed-loop system to be input-state stable, and derive an upper bound on how much the trajectory deviates from the nominal value when the matrices that describe the system are not accurately known. We run extensive simulations to study the performance of GNN-based distributed controllers and show that they are computationally efficient and scalable.
\end{abstract}

\ifundefined{arXiv}
    \begin{keyword}
        distributed control, graph neural networks, graph signal processing
    \end{keyword}
    \maketitle
\else
    \blfootnote{This work is supported by grants from ONR, NSF and AFOSR.}
    \blfootnote{F. Gama is with the Department of Electrical and Computer Engineering, Rice University, Houston, TX 77005 USA email: fgama@rice.edu.}
    \blfootnote{S. Sojoudi is with the Department of Electrical Engineering and Computer Sciences, University of California, Berkeley, CA, 94709 USA e-mail: sojoudi@berkeley.edu.}
    \blfootnote{Partial results have appeared in \cite{Gama21-LQRGNN}.}
\fi


\section{Introduction} \label{sec:intro}



The use of linear models to describe dynamical systems has found widespread use in many areas of physics, mathematics, engineering and economics \cite{Kailath80-LinearSystems}. Linear systems are mathematically tractable and can thus be used to derive properties, draw insights, and improve on our ability to successfully control these systems. In particular, designing optimal controllers that can steer the system into a desired state while minimizing some given cost has become a problem of paramount importance \cite{AndersonMoore89-LQR}.

Obtaining an optimal controller that minimizes a quadratic cost on the states and the actions, following a linear dynamic model, gives rise to the well-studied linear-quadratic control problem \cite{Dean20-SampleLQR}. As it happens, the optimal linear-quadratic controller is \emph{linear} and acts on the knowledge of the system state at a given time to produce the optimal control action for that time instant. Furthermore, when considering an infinite-time horizon for minimizing the quadratic cost, the resulting optimal controller is not only linear but also static, meaning that the same linear mapping is used between state and control action for all time instants.

Network systems are one particular class of dynamical systems that has become increasingly relevant. These systems are comprised of a set of interconnected components that are capable of exchanging information. They are further equipped with the ability to autonomously decide on an action to take based on the individual state of each component and the information relied through the communications with other neighboring components. The objective of controlling network systems is to coordinate the individual actions of the components so that they are conducive to the accomplishment of some global task \cite{Fattahi19-SingleSample}.

The dynamics of some network systems can be effectively described by a linear model. Thus, if such systems are coupled with a quadratic cost, a corresponding linear-quadratic problem is obtained. As such, the optimal control actions are readily available. While the optimal controllers are linear, they require information from the components in the network irrespective of their interconnections. That is, to compute the optimal controller, an additional unit capable of accessing all components instantaneously is required. In the context of network systems, this is called a \emph{centralized} approach.

Centralized controllers face limitations in terms of scalability and implementation. For increasingly large networks, the centralized unit requires more direct connections to all the components of the system. Similarly, the computational cost increases directly with the size of the network, since a single unit is responsible for computing the control actions of all the components. It is also less robust to changes in the network. A failed connection between the centralized unit and any of the components would render that component uncontrollable. 

Network systems are characterized by the connections between components, which naturally impose a distributed structure on the flow of data. Fundamentally, it can be leveraged in the design of controllers. By doing so, one can overcome many of the limitations of centralized controllers. Thus, we focus on leveraging the data structure to obtain \emph{distributed} controllers. These are control actions that depend only on local information provided by components that share a connection and that can be computed separately by each component.

Imposing a distributed constraint on the linear-quadratic control problem renders it intractable in the most general case \cite{Witsenhausen68-Counterexample}. While there is a large class of distributed control problems that admit a convex formulation \cite{Rotkowitz06-DecentralizedConvex}, many of them lead to complex solutions that do not scale with the size of the network \cite{Fattahi19-TransformCentralized}. An alternative approach is to adopt a linear parametrization of the controller and find a surrogate of the original problem that admits a scalable solution. The resulting controller is thus a sub-optimal linear distributed controller, and stability and robustness analyses are provided \cite{Fazelnia17-LowerBoundLQR, Matni19-SystemLevelApproach, Fattahi19-LQR}.

However, even in the context of linear network systems with a quadratic cost, the optimal distributed controller may not be linear \cite{Witsenhausen68-Counterexample}. In this paper, we thus adopt a nonlinear parametrization of the controller. More specifically, we focus on the use of graph neural networks (GNNs) \cite{Gama20-GNNs}. GNNs consist of a cascade of blocks (commonly known as layers) each of which applies a bank of graph filters followed by a pointwise nonlinearity. GNNs exhibit several desirable properties in the context of distributed control. Most importantly, they are naturally local and distributed, meaning that by adopting a GNN as a mapping between states and actions, a distributed controller is automatically obtained. Furthermore, they are permutation equivariant and Lipschitz continuous to changes in the network \cite{Gama20-Stability}. These two properties allow them to scale up and transfer \cite{Ruiz20-Transferability}.

Distributed controllers leveraging neural network techniques can be found in \cite{Gama21-LQRGNN, Capella03-DistributedNN, Huang05-LargeScaleDecentralized, Srinivasan06-ContinuousOnlineLearning, Chen13-DecentralizedPID, Liu15-ContinuousTimeUnknown, Yang17-FormationControlRBF, Wang20-NeuroOptimal, Wang20-Wastewater, Gama21-ControlGNN}. These controllers typically use a distinct multi-layer perceptron (MLP) to parametrize the controller at each component \cite{Capella03-DistributedNN, Huang05-LargeScaleDecentralized, Srinivasan06-ContinuousOnlineLearning, Chen13-DecentralizedPID, Liu15-ContinuousTimeUnknown, Yang17-FormationControlRBF} or rely on adaptive critic control \cite{Wang20-NeuroOptimal, Wang20-Wastewater}. Assigning a separate MLP to each component implies that the number of parameters to learn increases proportionally with the size of the network system, becoming increasingly harder to train, and thus this approach is not scalable. The use of GNNs imposes a weight-sharing scheme that avoids scalability problems. These are leveraged in \cite{Gama21-ControlGNN} in the context of specific robotics problems. The distributed linear-quadratic problem using GNNs was investigated in our conference paper \cite{Gama21-LQRGNN}.

In this work, we focus on finding distributed controllers for the distributed linear-quadratic problem. Our main contributions are:

\begin{list}{}{
        \setlength{\labelwidth}{30pt}
        \setlength{\leftmargin}{30pt}
        \setlength{\labelsep}{10pt}
        \setlength{\itemsep}{5pt}
        \setlength{\topsep}{5pt}
        \setlength{\parskip}{0pt}
    }

    \item[\textbf{(C1)}] We propose to parametrize the distributed controller with a GNN, obtaining a naturally distributed architecture that is capable of capturing nonlinear relationships between input and output, as it was initially investigated in our preliminary work \cite{Gama21-LQRGNN}.

    \item[\textbf{(C2)}] We obtain an improved sufficient condition for closed-loop input-state stability of the controller.

    \item[\textbf{(C3)}] We study the problem of systems whose linear description is not accurately known. We analyze how the stability of the system changes and obtain an upper bound on the deviation of the trajectory from its nominal value.

    \item[\textbf{(C4)}] We present new simulations that provide better insight into GNN-based controllers for a distributed LQR problem.

\end{list}

The remainder of this paper is organized as follows. We formulate the linear-quadratic problem in Section~\ref{sec:LQR} and postulate the use of graph neural networks in Section~\ref{sec:GNN} as a practically useful nonlinear parametrization of the unknown distributed controller. We cast the distributed linear-quadratic problem as a self-supervised learning problem, which can be efficiently solved by traditional machine learning techniques. To study the effect of adopting a GNN-based controller on the entire dynamical system, we obtain a sufficient condition for the resulting closed-loop system to be input-state stable and derive an upper bound on the trajectory deviation from its nominal value when the system matrices are unknown and only estimates are available. We include numerical simulations in Section~\ref{sec:sims} to investigate the performance of GNN-based distributed controllers and their dependence on design hyperparameters, as well as their scalability. Conclusions are drawn in Section~\ref{sec:conclusions}. Proofs are provided in the appendix.


\section{The Linear-Quadratic Problem} \label{sec:LQR}



The linear-quadratic problem is one of the fundamental problems in optimal control theory \cite{AndersonMoore89-LQR}. Consider a system described by a state vector $\vcx(t) \in \fdR^{F}$ and controlled by an action $\vcu(t) \in \fdR^{G}$ at time $t \in \{0,1,2,\ldots\}$. The system evolves following a linear dynamic
\begin{equation} \label{eq:linearDynamicSingle}
    \vcx(t+1) = \mtbA \vcx(t) + \mtbB \vcu(t)
\end{equation}
determined by $\mtbA \in \fdR^{F \times F}$ called the \emph{system matrix} and $\mtbB \in \fdR^{F \times G}$ called the \emph{control matrix}. These two matrices are considered to be known and given in the problem formulation. The objective is to drive the system towards a desired, target state value. To this end, a \emph{controller} $\fnPhi: \fdR^{F} \to \fdR^{G}$ that maps the current state of the system $\vcx(t)$ into an appropriate action $\vcu(t) = \fnPhi(\vcx(t))$ is typically designed. In optimal control, it is desirable to find a controller that minimizes a given cost. In particular, the focus here is on the quadratic cost given by
\begin{equation} \label{eq:quadraticCostSingle}
    \fnJ \Big( \{\vcx(t)\} ,\{\vcu(t)\} \Big) = \sum_{t=0}^{\infty} \big(\vcx(t)^{\Tr} \mtbQ \vcx(t) + \vcu(t)^{\Tr} \mtbR \vcu(t) \big)
\end{equation}
for two known matrices $\mtbQ \in \fdR^{F \times F}$ and $\mtbR \in \fdR^{G \times G}$ such that $\mtbQ \succeq 0$ and $\mtbR \succ 0$, given in the problem formulation.

The linear-quadratic problem can be formulated as
\begin{IEEEeqnarray}{CLR} \IEEEyesnumber \label{eq:LQRsingle}
    \min_{\fnPhi \in \fdPhi} & \fnJ \Big( \{\vcx(t)\} ,\{\vcu(t)\} \Big)
        \IEEEyessubnumber \label{subeq:LQRsingleObj} & \\
    \text{s.t.} & \vcx(t+1) = \mtbA \vcx(t) + \mtbB \vcu(t), & \quad \forall t \in \{0,1,\ldots\}
        \IEEEyessubnumber \label{subeq:LQRsingleDynamics} \\
                 &  \vcu(t) = \fnPhi(\vcx(t)), & \forall t \in \{0,1,\ldots\}
        \IEEEyessubnumber \label{subeq:LQRsingleController}
\end{IEEEeqnarray}
where $\fdPhi$ is the space of all functions $\fnPhi: \fdR^{F} \to \fdR^{G}$, see \cite{AndersonMoore89-LQR}. The objective function \eqref{subeq:LQRsingleObj} is the quadratic cost \eqref{eq:quadraticCostSingle}, the constraint \eqref{subeq:LQRsingleDynamics} imposes the linear dynamics of the system \eqref{eq:linearDynamicSingle} and the constraint \eqref{subeq:LQRsingleController} forces the solution to be a function $\fnPhi:\fdR^{F} \to \fdR^{G}$. The optimal controller obtained from solving \eqref{eq:LQRsingle} is formally known as a \emph{linear-quadratic regulator} (LQR) and is given by
\begin{equation}
    \vcu^{\opt}(t) = \fnPhi^{\opt}\big( \vcx(t) \big) = \mtK^{\opt} \vcx(t),
\end{equation}
with $\mtK^{\opt} \in \fdR^{F \times G}$ being a linear operator that depends on the matrices that describe the problem, namely $\mtbA, \mtbB, \mtbQ, \mtbR$, and can be readily computed \cite[Sec.~2.4]{AndersonMoore89-LQR}. Notably, the LQR is a linear controller \cite[eq. (2.4-8)]{AndersonMoore89-LQR}.

A network system can be conveniently described by means of a graph $\stG = (\stV, \stE)$, where $\stV = \{v_{1},\ldots,v_{N}\}$ is the set of $N$ nodes and $\stE \subseteq \stV \times \stV$ is the set of edges. The node $v_{i}$ represents the $i^{\text{th}}$ component of the system, while the existence of the edge $(v_{i},v_{j}) \in \stE$ implies that nodes $v_{i}$ and $v_{j}$ are interconnected and capable of exchanging information. In a network system, each node is described by a state $\vcx_{i}(t) \in \fdR^{F}$ and is capable of autonomously taking an action $\vcu_{i}(t) \in \fdR^{G}$ at time $t$. The states and actions of all nodes are collected in two matrices $\mtX(t) \in \fdR^{N \times F}$ and $\mtU(t) \in \fdR^{N \times G}$, respectively, where each row corresponds to the state or action of each agent.

Similar to \eqref{eq:linearDynamicSingle}, consider a network system with linear dynamics modeled as
\begin{equation} \label{eq:linearDynamic}
    \mtX(t+1) = \mtA \mtX(t) \mtbA + \mtB \mtU(t) \mtbB,
\end{equation}
where $\mtA \in \fdR^{N \times N}$ is called the \emph{network system matrix} and $\mtB \in \fdR^{N \times N}$ the \emph{network control matrix}.
The linear system in \eqref{eq:linearDynamic} is an extension of \eqref{eq:linearDynamicSingle} tailored to handle network data. In particular, it considers that each node $\lmv_{i}$ is described by an $F$-dimensional state $\vcx_{i}(t)$, collected in the rows of the matrix $\mtX(t)$. It also decouples the impact that the network topology has on the evolution of the system (through matrices $\mtA$ and $\mtB$) from the impact that the individual states have (through $\mtbA$ and $\mtbB$). To see this, note that matrices $\mtA \in \fdR^{N \times N}$ and $\mtB \in \fdR^{N \times N}$ act as linear combinations of state values across different nodes, and as such, these combinations are typically restricted to follow the interconnection of the components (although, technically, they need not be). It is thus noted that while the matrix $\mtA$ need not be the adjacency matrix of the graph, it is usually a function of it --for example, both matrices may share the same eigenvectors. The matrices $\mtbA \in \fdR^{F \times F}$ and $\mtbB \in \fdR^{G \times F}$ determine the evolution of the values of the state at each individual node and, while they can be arbitrary, they force all individual state nodes to follow the same evolution. Finally, it is noted that, while a more general linear description can be obtained by adopting a network state of dimension $NF$ and using \eqref{eq:linearDynamicSingle}, doing so obscures the effect of the topology of the network on the evolution of the system. Thus, \eqref{eq:linearDynamic} is adopted from now on for mathematical simplicity --and it is observed that all the results derived from here onward hold for \eqref{eq:linearDynamicSingle} as well.

To pose the linear-quadratic problem for a network system, the following quadratic cost as a counterpart of \eqref{eq:quadraticCostSingle} is adopted:
\begin{equation} \label{eq:quadraticCost}
    \fnJ \Big( \{\mtX(t)\}, \{\mtU(t)\}\Big) =  \sum_{t=0}^{\infty} \Big( \| \mtX(t) \mtbQ^{1/2} \|_{F}^{2}+\| \mtU(t) \mtbR^{1/2} \|_{F}^{2} \Big),
\end{equation}
where $\mtbQ \in \fdR^{F \times F}$ and $\mtbR \in \fdR^{G \times G}$ are two given positive definite matrices, and where $\| \cdot \|_{F}$ denotes the Frobenius matrix norm. The linear-quadratic control problem for a network system can then be posed in the form of \eqref{eq:LQRsingle}, by replacing the cost \eqref{subeq:LQRsingleObj} with \eqref{eq:quadraticCost}, the linear dynamics \eqref{subeq:LQRsingleDynamics} with \eqref{eq:linearDynamic}, and the controller \eqref{subeq:LQRsingleController} with one such that $\fnPhi: \fdR^{N \times F} \to \fdR^{N \times G}$.

The controller solving the linear-quadratic problem for a network system is also linear. However, in order to compute the optimal control action, the system needs to access the state of arbitrary components of the system, beyond those directly connected. This constitutes a \emph{centralized} controller. In what follows, the focus is on finding a \emph{distributed} controller.

A distributed controller, which is denoted by $\fnPhi(\mtX(t);\stG)$ to emphasize its dependence on the topology of the network system $\stG$, should satisfy the properties that the control actions $\mtU(t)$ rely only on local information provided by other components that share a direct connection, and that they can be computed separately at each component. The use of a distributed controller overcomes some of the issues that arise when considering a centralized one. Namely, they are expected to scale better, since they do not require a single unit to compute the actions of all components in the system, and they are easy to implement since they do not demand an infrastructure capable of connecting all components to the single centralized unit.

The distributed linear-quadratic problem can be written as
\begin{IEEEeqnarray}{CLR} \IEEEyesnumber \label{eq:distributedLQR}
    \!\!\min_{\fnPhi \in \fdPhi_{\stG}} & \!\fnJ \Big( \{\mtX(t)\} ,\{\mtU(t)\} \Big) &
    \IEEEyessubnumber \label{subeq:distributedLQRobj} \\
    \!\!\text{s.t.} & \!\mtX(t+1) \! = \! \mtA \mtX(t) \mtbA \! + \! \mtB \mtU(t) \mtbB, & \ \forall t \! \in \! \{0,1,\ldots\}
    \IEEEyessubnumber \label{subeq:distributedLQRdynamics} \\
    & \mtU(t) \! = \! \fnPhi\big( \mtX(t); \stG\big), & \forall t \! \in \! \{0,1,\ldots\}, \IEEEyessubnumber \label{subeq:distributedLQRcontroller}
\end{IEEEeqnarray}
where $\fdPhi_{\stG}$ is the space of all functions $\fnPhi(\cdot;\stG): \fdR^{N \times F} \to \fdR^{N \times G}$ that can be computed in a distributed manner (i.e. relying only on local information and computed separately at each component). It is noted that the constraint \eqref{subeq:distributedLQRcontroller} further restricts the feasible set, and as such, the optimal value $J_{\stG}^{\opt}$ of solving \eqref{eq:distributedLQR} is lower bounded by the optimal value $J^{\opt}$ incurred when using the optimal centralized controller, i.e. $J_{\stG}^{\opt} \geq J^{\opt}$.

Solving problem \eqref{eq:distributedLQR} requires solving an optimization problem over the space of functions $\fdPhi_{\stG}$. This is mathematically intractable in the general case, and requires specific approaches involving variational methods, dynamic programming or kernel-based functions \cite{Jahn07-NonlinearOptimization}. While there is a large class of distributed control problems that admit a convex formulation \cite{Rotkowitz06-DecentralizedConvex}, many of them lead to complex solutions that do not scale with the size of the network \cite{Fattahi19-TransformCentralized}.

Considering the inherent complexities of functional optimization, a popular approach is to adopt a specific model for the mapping $\fnPhi$, leading to a parametric family of controllers. Inspired by the linear nature of the optimal centralized solution and its mathematical tractability, a distributed linear parametrization was adopted in \cite{Fazelnia17-LowerBoundLQR, Fattahi19-LQR}. Many properties of this parametric family of controllers have been studied, including stability, robustness and (sub)optimality \cite{Fazelnia17-LowerBoundLQR, Fattahi19-LQR}.

However, it is known that the linear system \eqref{eq:linearDynamic} may have a nonlinear optimal controller if we force a distributed nature on its solution \cite{Witsenhausen68-Counterexample}. This suggests that it would be more convenient to work with nonlinear parametrizations, rather than linear ones. In particular, this work focuses on graph neural networks (GNNs) \cite{Gama20-GNNs}. These are nonlinear mappings that exhibit several desirable properties. Fundamentally, they are naturally computed in a distributed manner relying only on local information provided by directly connected components. This implies that any controller that is parametrized by means of a GNN respects the distributed nature of the system (as given by the graph $\stG$), naturally incorporating the distributed constraint \eqref{subeq:distributedLQRcontroller} into the chosen parametrization.


\section{Graph Neural Networks} \label{sec:GNN}



Finding the optimal distributed controller by solving problem \eqref{eq:distributedLQR} is intractable in its most general case. This is due to the constraint \eqref{subeq:distributedLQRcontroller} that the solution satisfies a distributed computation. In what follows, a parametric family of distributed controllers is adopted. More concretely, inspired by the fact that the optimal controller is usually nonlinear, GNN-based controllers are considered. The basics of graph signal processing are introduced in Section~\ref{subsec:GSP}, which allows for the definition of GNNs in Section~\ref{subsec:GCNN}. A discussion on how to cast the resulting finite-dimensional optimal control problem as an unsupervised learning problem follows in Section~\ref{subsec:selfsupervised}.


\subsection{Graph signal processing} \label{subsec:GSP}

Graph signal processing (GSP) is a framework tailored to describe, analyze, and understand distributed problems \cite{Ortega18-GSP}. Given a graph $\stG = (\stV, \stE)$ that describes the structure of the data under study, a \emph{graph signal} $\fnx: \stV \to \fdR$ is defined as a mapping from the nodes of the graph to a real number. By imposing an arbitrary order on the nodes, this graph signal can be conveniently described as a vector $\vcx \in \fdR^{N}$ whose $i^{\text{th}}$ element corresponds to the signal value associated to node $v_{i}$, denoted as $[\vcx]_{i} = \fnx(v_{i}) = x_{i} \in \fdR$. Note that $[\cdot]_{i}$ ($[\cdot]_{ij}$) denotes the value of the $i^{\text{th}}$ ($(i,j)^{\text{th}}$) entry of a vector (matrix). To be able to use the concept of graph signals to describe the state $\mtX(t) \in \fdR^{N \times F}$ and the control action $\mtU(t) \in \fdR^{N \times G}$ in a network system, an extension to vector-valued mappings is needed. Define the \emph{vector-valued graph signal} as $\fnX: \stV \to \fdR^{F}$, where $\fnX(\lmv_{i}) = \vcx_{i} \in \fdR^{F}$. It can then be described by means of a matrix $\mtX \in \fdR^{N \times F}$, where each row corresponds to the signal value at each node
\begin{equation} \label{eq:graphSignal}
    \mtX =
    \begin{bmatrix}
        \fnX^{\Tr}(\lmv_{1}) \\
        \vdots \\
        \fnX^{\Tr}(\lmv_{N})
    \end{bmatrix}
    =
    \begin{bmatrix}
        \vcx_{1}^{\Tr} \\
        \vdots \\
        \vcx_{N}^{\Tr}
    \end{bmatrix}
    =
    \begin{bmatrix}
        \vcx^{1} & \cdots & \vcx^{F}
    \end{bmatrix}.
\end{equation}
In this equation, the vector-valued graph signal $\fnX$ is viewed as a collection of $F$ traditional scalar-valued graph signals $\{\fnx^{f}\}_{f=1}^{F}$, placed in the columns of the matrix. Observe that $\fnx$ stands for the graph signal as a function, $x_{i}$ stands for the scalar value adopted by node $\lmv_{i}$, $\vcx$ for the vector collecting all these values; likewise, $\fnX$ stands for the vector-valued graph signal and $\mtX$ for the matrix collecting all the states at all nodes. All these quantities are related to the graph signal that is used to describe the state of the system. The \emph{size of the vector-valued graph signal} is defined as
\begin{equation} \label{eq:graphSignalNorm}
    \| \mtX \| = \| \mtX\|_{2,1} = \sum_{f=1}^{F} \| \vcx^{f} \|_{2}.
\end{equation}
The $L_{2,1}$ norm for matrices \eqref{eq:graphSignalNorm} is chosen as the size of the graph signal norm for both its robustness and its mathematical tractability. Note that if $F=1$, then $\|\mtX\| = \|\vcx\|_{2}$ as expected. Finally, note that, in what follows, the term ``graph signal'' is used indistinctly to refer to either vector-valued or scalar-valued ones. Note that the trajectories of system states $\{\mtX(t)\}$ and control actions $\{\mtU(t)\}$ can each be modeled as a sequence of graph signals, indexed by the time parameter $t$ ---also known as graph processes \cite{Gama19-GLLN}.

Describing a graph signal in terms of a matrix is convenient because it allows for easy mathematical manipulation. However, this causes the loss of the information related to the underlying graph support. To recover this information, the graph is described in terms of a \emph{support matrix} $\mtS \in \fdR^{N \times N}$ that respects the sparsity of the graph, i.e. $[\mtS]_{ij}=s_{ij}$ can be nonzero if and only if $i=j$ or $(\lmv_{j},\lmv_{i}) \in \stE$. Any matrix that satisfies this condition can be used as a support matrix and thus it is a design choice. Typical choices include the adjacency matrix, the Laplacian matrix, the Markov matrix, and their normalized counterparts \cite{Ortega18-GSP}. A linear mapping $\fnS:\fdR^{N \times F} \to \fdR^{N \times F}$ between graph signals that relates the input to the underlying graph support $\mtY = \fnS(\mtX) = \mtS \mtX$ can be defined, such that the $(i,f)^{\text{th}}$ entry $y_{i}^{f}$ of the matrix $\mtY$ (the value of the $f^{\text{th}}$ scalar graph signal at node $\lmv_{i}$) is computed as
\begin{equation} \label{eq:graphShift}
    y_{i}^{f} = [\mtY]_{if} = [\mtS \mtX]_{if} = \sum_{j=1}^{N} [\mtS]_{ij} [\mtX]_{jf} = \sum_{j:\lmv_{j} \in \stN_{i} \cup \{v_{i}\}} s_{ij} x_{j}^{f},
\end{equation}
where $\stN_{i} = \{\lmv_{j} \in \stV: (\lmv_{j},\lmv_{i}) \in \stE\}$ is the set of nodes that share an edge with $\lmv_{i}$ and $[\mtX]_{jf} = [\vcx_{j}]_{f} = [\vcx^{f}]_{j} = x_{j}^{f}$, see \eqref{eq:graphSignal}. The last equality in \eqref{eq:graphShift} holds because of the sparsity pattern of the support matrix $\mtS$ and implies that the computation of the value of the output graph signal $\mtY$ at node $v_{i}$ only requires information relied by its neighbors. In this respect, one can then think of the pair $(\mtX,\mtS)$ as the complete graph data containing all the relevant information; however, only $\mtX$ is regarded as the \emph{actionable} variable (the signal), while the support $\mtS$ is considered given and fixed and is determined by the physical constraints of the network.

The support matrix $\mtS$ can be thought of as a linear mapping between graph signals that effectively relates the input to the underlying graph support. As such, the operation $\mtS \mtX$ becomes the basic building block of graph signal processing \cite{Ortega18-GSP}. A finite-impulse response (FIR) graph filter $\fnH: \fdR^{N \times F} \to \fdR^{N \times G}$ is a linear operation between two graph signals, defined as a polynomial on $\mtS$
\begin{equation} \label{eq:graphFilter}
    \mtY = \fnH(\mtX;\mtS, \stH) = \sum_{k=0}^{K} \mtS^{k} \mtX \mtH_{k},
\end{equation}
where $\stH = \{\mtH_{k} \in \fdR^{F \times G}, k = 0,\ldots,K\}$ is the set of filter taps $\mtH_{k}$ that characterize the filter response. The filter \eqref{eq:graphFilter} is linear in the input $\mtX$ and is capable of mapping between vector-valued graph signals of different dimensions (but defined on the same graph given by $\mtS$).

The graph filter is a naturally distributed operation, meaning that the output of filtering in \eqref{eq:graphFilter} can be computed separately by each node relying only on information provided by one-hop neighbors. To understand this, note that multiplications to the left of $\mtX$ carry out a linear combination of signal values across different nodes, and thus this matrix needs to respect the sparsity of the graph so that only values at neighboring nodes are combined. This is the case for $\mtS^{k} = \mtS^{k-1} \mtS$ which amounts to communicating $k$ times with the one-hop neighbors. Therefore, an FIR graph filter is a distributed linear operation since it requires only $K$ communication exchanges with one-hop neighbors. Multiplications to the right of $\mtX$, on the other hand, are linear combinations of signal values located at the same node, and can thus be arbitrary. In particular, \eqref{eq:graphFilter} imposes a weight-sharing scheme, where the signal values at all nodes are combined in the same way. Finally, note that \eqref{eq:graphFilter} is a compact notation for denoting the graph filtering operation but, in practice, the nodes do not need access to the full matrix $\mtS$. They only need access to the entries corresponding to their one-hop neighbors in order to compute the proper linear combination indicated in \eqref{eq:graphShift}. Thus, in practice, the nodes need not know the entire graph topology.

The FIR graph filter \eqref{eq:graphFilter} can be understood as a bank of $FG$ filters acting on scalar-valued graph signals, see \cite{Segarra17-Linear, Gama20-GNNs}. It can thus be characterized by its frequency response given by the collection of univariate polynomials
\begin{equation} \label{eq:freqResponse}
\Big\{h_{fg}(\lambda)= \sum_{k=0}^{K} [\mtH_{k}]_{fg} \lambda^{k} : \lambda \in [\lambda_{l},\lambda_{h}] \ ,\  f=1,\ldots,F \ , \ g=1,\ldots,G \Big\}.
\end{equation}
The values of $\lambda_{l}$ and $\lambda_{h}$ are determined by the specific problem under study, and are typically set to be the minimum and maximum eigenvalues of the given $\mtS$. However, they may be different if the problem requires the filters to be able to act on more than one graph, see Section~\ref{subsec:robustness}. In that case, it may be convenient to select the interval so that it contains all the eigenvalues of all the support matrices under consideration.

The characterization of the filter in terms of the frequency response \eqref{eq:freqResponse} allows for the definition of the \emph{size of the graph filter} as
\begin{equation} \label{eq:graphFilterNorm}
    C_{\fnH} = \| \mtC_{\fnH} \|_{\infty} \quad \text{with} \quad \mtC_{\fnH} \in \fdR^{F \times G} : [\mtC_{\fnH}]_{fg} = \max_{\lambda \in [\lambda_{l},\lambda_{h}]} |h_{fg}(\lambda)|.
\end{equation}
In what follows, the focus is further set on a particular class of graph filters, known as \emph{Lipschitz} filters. The graph filter \eqref{eq:graphFilter} is said to be a Lipschitz filter if its frequency response \eqref{eq:freqResponse} satisfies that
\begin{equation} \label{eq:LipschitzFilters}
    |h_{fg}(\lambda_{1}) - h_{fg}(\lambda_{2})| \leq \gamma_{fg} |\lambda_{1}-\lambda_{2}|, \quad \forall \lambda_{1},\lambda_{2} \in [\lambda_{l},\lambda_{h}],
\end{equation}
for some constant $\gamma_{fg} > 0$, for all $f\in\{1,\ldots,F\}$ and $G \in \{1,\ldots,G\}$. The Lipschitz constant $\Gamma_{\fnH}$ of the filter is computed as
\begin{equation} \label{eq:graphFilterLipzchitz}
\scGamma_{\fnH} = \| \mtGamma_{\fnH} \|_{\infty} \quad \text{with} \quad \mtGamma_{\fnH} \in \fdR^{F \times G} : [\mtGamma_{\fnH}]_{fg} = \gamma_{fg},
\end{equation}
which is the infinity norm $\|\mtGamma_{\fnH}\|_{\infty}$ for a matrix $\mtGamma_{\fnH} \in \fdR^{F \times G}$ containing the corresponding Lipschitz constants of each individual filter (i.e. the maximum absolute row sum of the matrix).


\subsection{Graph neural networks} \label{subsec:GCNN}

Graph filters are distributed, linear operations and, as such, are only capable of capturing linear relationships between input and output. However, the objective of this work is to learn nonlinear distributed controllers. Arguably, the most straightforward way of converting a graph filter into a nonlinear processing unit without affecting its distributed nature is to include a pointwise nonlinearity
\begin{equation} \label{eq:graphPerceptron}
    \mtY = \fnsigma \big( \fnH( \mtX; \mtS, \stH) \big),
\end{equation}
where $\fnsigma: \fdR \to \fdR$ is a nonlinearity applied pointwise to the entries of the graph signal obtained from applying the graph filter, i.e. $[\fnsigma(\mtX)]_{if} = \fnsigma([\mtX]_{if})$. The operation \eqref{eq:graphPerceptron} is known as a \emph{graph perceptron} \cite{Gama20-GNNs} and, since the nonlinearity $\fnsigma(\cdot)$ is applied pointwise to the entries of the graph signal, it retains the distributed nature of the graph filter.

The graph perceptron \eqref{eq:graphPerceptron} is a nonlinear processing unit, but it has a limited representation power. To overcome this, a \emph{graph convolutional neural network} $\fnPhi(\cdot; \mtS, \stH): \fdR^{N \times F} \to \fdR^{N \times G}$ is defined as a cascade of $L$ graph perceptron units
\begin{IEEEeqnarray}{L} \IEEEyesnumber \label{eq:GCNN}
    \mtX_{\ell} = \fnsigma \big( \fnH_{\ell}( \mtX_{\ell-1}; \mtS, \stH_{\ell}) \big),
        \IEEEyessubnumber \label{subeq:GCNNlayer} \\
    \fnPhi(\mtX; \mtS, \stH) = \mtX_{L},
        \IEEEyessubnumber \label{subeq:GCNNoutput}
\end{IEEEeqnarray}
with $\stH = \cup_{\ell=1}^{L} \stH_{\ell}$. The input to the first layer is the graph signal $\mtX_{0} = \mtX$ and the output is collected at the last layer. The space of all possible representations obtained by using a GNN is characterized by the set of filter taps $\stH$, which contains the filter coefficients $\stH_{\ell} = \{\mtH_{\ell k} \in \fdR^{F_{\ell-1} \times F_{\ell}} \ , \ k = 0, 1, \ldots, K_{\ell}\}$ at each layer $\ell \in \{1,\ldots,L\}$. Note that $F_{0}=F$ and $F_{L} = G$. The nonlinear function $\fnsigma(\cdot)$, the number of layers $L$, the dimension of the graph signals at each layer $F_{\ell}$ and the number of filter taps at each layer $K_{\ell}$ are design choices and are typically referred to as \emph{hyperparameters} \cite{Bergstra11-Hyperparameter}.



\subsection{Self-supervised learning} \label{subsec:selfsupervised}

The linear graph filter \eqref{eq:graphFilter} and the nonlinear GNN \eqref{eq:GCNN} have been introduced as naturally distributed parametrizations. By choosing to adopt one of these models for the to-be-learned controller, the focus is immediately set on a distributed mapping between the state and the action, turning the functional optimization problem \eqref{eq:distributedLQR} into the finite-dimensional optimization
\begin{IEEEeqnarray}{CL} \IEEEyesnumber \label{eq:distributedLQRparam}
    \min_{\stH} & \fnJ \Big( \{\mtX(t)\} ,\{\mtU(t)\} \Big)
    \IEEEyessubnumber \label{subeq:distributedLQRparamObj} \\
    \text{s. t.} & \mtX(t+1) = \mtA \mtX(t) \mtbA + \mtB \mtU(t) \mtbB,
    \IEEEyessubnumber \label{subeq:distributedLQRparamDynamics} \\
    & \mtU(t) = \fnPhi\big( \mtX(t); \mtS, \stH\big) \IEEEyessubnumber \label{subeq:distributedLQRparamController}.
\end{IEEEeqnarray}
The constraint \eqref{subeq:distributedLQRparamController} replaces a generic distributed controller $\fnPhi(\mtX(t); \stG)$ in \eqref{subeq:distributedLQRcontroller} with a controller that admits a parametrization based on either a graph filter or a GNN. The resulting controller $\fnPhi(\mtX(t); \mtS, \stH^{\opt})$ with filter coefficients $\stH^{\opt}$ that solves \eqref{eq:distributedLQRparam} naturally satisfies the distributed constraint.

Problem \eqref{eq:distributedLQRparam} is nonconvex when adopting a GNN-based controller \eqref{subeq:distributedLQRparamController}. Thus, to approximately solve this problem, the empirical risk minimization (ERM) approach that is typical in learning theory \cite{Vapnik00-StatisticalLearning} is leveraged. To do this, a \emph{training set} $\stT = \{\mtX_{1,0}, \ldots, \mtX_{|\stT|,0}\}$ containing $|\stT|$ samples $\mtX_{p,0}$ drawn independently from some distribution $\fnp$ is considered to be the different random initializations of the system. Then, the ERM problem is given by
\begin{IEEEeqnarray}{CL} \IEEEyesnumber \label{eq:ERM}
    \min_{\stH} & \sum_{p=1}^{|\stT|} \fnJ \Big( \{\mtX_{p}(t)\} ,\{\mtU_{p}(t)\} \Big)
        \IEEEyessubnumber \label{subeq:ERMparamObj} \\
    \text{s. t.} & \mtX_{p}(t+1) = \mtA \mtX_{p}(t) \mtbA + \mtB \mtU_{p}(t) \mtbB,
        \IEEEyessubnumber \label{subeq:ERMdynamics} \\
    & \mtU_{p}(t) = \fnPhi\big( \mtX_{p}(t); \mtS, \stH\big) ,
        \IEEEyessubnumber \label{subeq:ERMcontroller} \\
    & \mtX_{p}(0) = \mtX_{p,0}.
        \IEEEyessubnumber \label{subeq:ERMinitial}
\end{IEEEeqnarray}
Problem \eqref{eq:ERM} can be solved by means of an algorithm based on stochastic gradient descent \cite{Kingma15-ADAM}, efficiently computing the gradient of $\fnJ(\cdot, \cdot)$ with respect to the parameter $\stH$ by means of the back-propagation algorithm \cite{Rumelhart86-BackProp}. To estimate the performance of the learned controllers --i.e. those obtained by solving \eqref{eq:ERM}-- a new set of initial states is generated, called the test set, and the average quadratic cost \eqref{eq:quadraticCost} is computed on the resulting trajectories. In essence, the optimization problem \eqref{eq:distributedLQRparam} is transformed into a self-supervised ERM problem \eqref{eq:ERM} that is solved through simulated data.

It is observed that, during the training phase, the optimization problem \eqref{eq:ERM} has to be solved in a centralized manner due to the weight-sharing scheme imposed by the FIR graph filters (recall that this weight-sharing scheme is necessary for scalability, keeping the number of learnable parameters independent of the size of the graph). However, this training phase can be carried out offline, prior to online execution. Once the GNN-based controllers are learned and the training phase is finished, they can be deployed in an entirely distributed manner for testing in the online phase. It is noted that there exist distributed optimization algorithms that leverage consensus to arrive to the optimal set of filter taps $\stH$ \cite{Nedich2020-DistributedOptimization}. These techniques, however, are outside the scope of the present work and will be left as future research directions.


\section{Properties of GNN Controllers} \label{sec:GNNproperties}



GNNs have many suitable properties that make them appropriate choices for learning distributed controllers. As standalone processing units, they are naturally distributed architectures and have the properties of permutation equivariance and Lipschitz continuity to changes in the underlying graph support. As part of a linear dynamical system, GNN-based controllers can also be shown to stabilize the system. Furthermore, the deviation in the nominal trajectory due to unknown system matrices can be mitigated with properly learned filters. These properties, which are studied in this section, hold for any GNN controller of the form \eqref{eq:GCNN} that satisfy the corresponding hypotheses.


\subsection{GNN properties} \label{subsec:standalone}

The main motivation for choosing GNNs as parametrizations for the controller is that they are naturally distributed architectures. GNNs are built by using graph filters and pointwise nonlinearities. Graph filters are distributed operations, as discussed after \eqref{eq:graphFilter}. The pointwise nonlinearity does not affect this, and thus GNNs are also distributed. It is noted that asynchronous implementations of graph filtering are possible \cite{Teke19-Asynchronous}. Additionally, GNNs are capable of learning nonlinear controllers, which is a key feature in the context of distributed control, as it is expected that optimal distributed controllers to be nonlinear \cite{Witsenhausen68-Counterexample}.

GNNs exhibit the property of permutation equivariance, \cite[Prop. 2]{Gama20-Stability}, which means that a reordering of the nodes does not affect the output, since it will be correspondingly reordered. This further implies that the GNNs are capable of leveraging any existing symmetries in the underlying graph topology to improve training. More specifically, learning how to process a given signal from the training set means that the GNN learns how to process the same signal anywhere in the graph with the same neighborhood topology. In a manner akin to the data augmentation that happens naturally by the choice of the convolution operation in regular convolutional neural networks (CNNs), permutation equivariance shows precisely one way in which the GNN exploits the data structure to improve training and generalization.

GNNs are also Lipschitz continuous to changes in the underlying graph \cite[Thm. 4]{Gama20-Stability}. This means that, if the underlying graph support is perturbed, the output of the GNN changes linearly with the size of perturbation. This implies that a GNN trained on one graph but tested on another one will still work well as long as both graphs are similar, see \cite{Ruiz20-Transferability}. It also implies that if the graph is not known exactly but has to be estimated, then the GNN can still be trained as long as the graph support estimate is good enough. Additionally, it indicates that GNNs are suitable for time-varying scenarios where the changes to the graph support are slow \cite{Gama21-ControlGNN}.



\subsection{Closed-Loop Stability} \label{subsec:stability}

GNNs have many suitable properties for learning distributed controllers. However, this does not necessarily guarantee that they are a good choice for a control system. In what follows, properties relating to GNN-based controllers within a linear dynamical system are studied.

A network system with the linear dynamics \eqref{eq:linearDynamic} is characterized by the set of matrices $\stD = \{\mtS, \mtA, \mtbA, \mtB, \mtbB\}$, where $\mtS \in \fdR^{N \times N}$ is the graph support matrix, $\mtA \in \fdR^{N \times N}$ and $\mtbA \in \fdR^{F \times F}$ are the system matrices, and $\mtB \in \fdR^{N \times N}$ and $\mtbB \in \fdR^{G \times F}$ are the control matrices. The trajectory of the system $\{\mtX(t)\}$ depends on these matrices. GNNs are capable of stabilizing the closed-loop dynamics of a distributed linear system $\stD$. More specifically, drawing from \cite{Lavaei20-StableRL},  the notion of input-state stability is defined as follows.


\begin{definition}[Input-state stability] \label{def:stability}
    Consider a linear dynamical system as in \eqref{eq:linearDynamic} controlled by $\mtU(t) = \fnPhi(\mtX(t))+\mtE(t)$ where $\mtE(t)$ is a disturbance term or exploratory signal. The system is input-state stable if, for all sequences $\{\mtX(t)\}$ and $\{\mtE(t)\}$ such that $\sum_{t=0}^{\infty} \|\mtX(t)\| < \infty$ and $\sum_{t=0}^{\infty} \|\mtE(t)\| < \infty$, there exist constants $\beta_{0},\beta_{1} \geq 0$ such that
    \begin{equation} \label{eq:stability}
        \sum_{t=0}^{\infty} \| \mtX(t) \| \leq \beta_{0}+ \beta_{1} \sum_{t=0}^{\infty} \| \mtE(t)\|.
    \end{equation}
\end{definition}

This definition of input-state stability is widely used \cite{Lavaei20-StableRL}. Given a trained GNN-based controller, a sufficient condition for the resulting system to be stable can be determined.


\begin{theorem}[Sufficient condition for input-state stability]
        \label{thm:stability}
    Consider a distributed linear system $\stD$. Assume that the system is controlled with a GNN \eqref{eq:GCNN} consisting of $L$ layers of filters $\fnH_{\ell}(\cdot;\mtS,\stH)$ with $F_{\ell}$ features and $K_{\ell}$ taps each. Let the nonlinearity $\fnsigma(\cdot)$ be such that $|\fnsigma(x)| \leq |x|$. Then, the closed-loop system is input-state stable if it holds that
    \begin{equation} \label{eq:stabilityCondition}
        \scxi (\stD, \stH) < 1,
    \end{equation}
    where
    \begin{equation} \label{eq:stabilityConstant}
        \scxi (\stD, \stH) = \| \mtA \|_{2} \| \mtbA\|_{\infty} + C_{\fnPhi} \| \mtB\|_{2} \| \mtbB\|_{\infty}
    \end{equation}
    is the stability constant, with $C_{\fnPhi} = \prod_{\ell=1}^{L} C_{\fnH_{\ell}}$ for $C_{\fnH_{\ell}}$ the size of the $\ell^{\text{th}}$ filter, see \eqref{eq:graphFilterNorm}.
\end{theorem}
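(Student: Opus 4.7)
The plan is to unroll the closed-loop recursion for $\mtX(t)$ under the perturbed control law $\mtU(t) = \fnPhi(\mtX(t); \mtS, \stH) + \mtE(t)$, establish a one-step contractive bound in the $\|\cdot\|_{2,1}$ norm with contraction constant $\scxi(\stD, \stH)$, and then sum a geometric series to obtain \eqref{eq:stability}. Concretely, I would start from the dynamics $\mtX(t+1) = \mtA\mtX(t)\mtbA + \mtB\mtU(t)\mtbB$, apply the triangle inequality for $\|\cdot\|$, and separately bound the drift term $\|\mtA\mtX(t)\mtbA\|$ and the input term $\|\mtB\mtU(t)\mtbB\|$.

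The first key step is a pair of ``sandwich'' inequalities of the form $\|\mtM \mtZ \mtbM\| \leq \|\mtM\|_{2}\, \|\mtbM\|_{\infty}\, \|\mtZ\|$ for any $\mtZ$. The $\|\mtM\|_{2}$ factor appears because left-multiplication acts columnwise and the $L_2$ norm of each column is controlled by the spectral norm; the $\|\mtbM\|_{\infty}$ factor appears because right-multiplication mixes features, and when we sum the resulting $L_2$ column norms over output features (as the $L_{2,1}$ norm prescribes), we pick up the maximum absolute row sum of $\mtbM$. Applying this first with $\mtM=\mtA$, $\mtbM=\mtbA$, and then with $\mtM=\mtB$, $\mtbM=\mtbB$, yields
\begin{equation*}
\|\mtX(t+1)\| \leq \|\mtA\|_{2} \|\mtbA\|_{\infty} \|\mtX(t)\| + \|\mtB\|_{2} \|\mtbB\|_{\infty} \bigl(\|\fnPhi(\mtX(t); \mtS, \stH)\| + \|\mtE(t)\|\bigr).
\end{equation*}

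The second key step is to establish $\|\fnPhi(\mtX; \mtS, \stH)\| \leq C_{\fnPhi}\,\|\mtX\|$. Proceeding layer by layer through \eqref{subeq:GCNNlayer}, the pointwise hypothesis $|\fnsigma(x)| \leq |x|$ gives $\|\fnsigma(\mtZ)\| \leq \|\mtZ\|$, so it suffices to show that each filter $\fnH_{\ell}$ satisfies $\|\fnH_{\ell}(\mtZ;\mtS,\stH_{\ell})\| \leq C_{\fnH_{\ell}}\|\mtZ\|$. This last inequality follows by decomposing the $g$-th output column as $\sum_{f} h_{fg}(\mtS)\vcx^{f}$, applying the spectral bound $\|h_{fg}(\mtS)\|_{2} \leq \max_{\lambda}|h_{fg}(\lambda)| = [\mtC_{\fnH_{\ell}}]_{fg}$, summing over $g$ to invoke the $L_{2,1}$ definition, and recognizing the result as an infinity-norm row sum, i.e.\ $C_{\fnH_{\ell}}$ from \eqref{eq:graphFilterNorm}. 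Composing the $L$ layers gives the product $C_{\fnPhi} = \prod_{\ell} C_{\fnH_{\ell}}$.

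Substituting back yields the linear recursion $\|\mtX(t+1)\| \leq \scxi(\stD,\stH)\,\|\mtX(t)\| + \|\mtB\|_{2}\|\mtbB\|_{\infty}\,\|\mtE(t)\|$. Iterating and using hypothesis \eqref{eq:stabilityCondition} that $\scxi(\stD,\stH) < 1$, the geometric series $\sum_{t\geq 0} \scxi^{t}$ converges, and swapping the order of summation in the convolution of $\scxi^{t}$ with $\|\mtE(t)\|$ produces the bound \eqref{eq:stability} with explicit constants $\beta_{0} = \|\mtX(0)\|/(1-\scxi)$ and $\beta_{1} = \|\mtB\|_{2}\|\mtbB\|_{\infty}/(1-\scxi)$. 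The main technical obstacle is the bookkeeping in the layerwise filter bound, since the chosen $L_{2,1}$ norm mixes an $L_2$-type column norm with an $L_1$-type aggregation across features; getting the spectral-versus-infinity pairing exactly right on both sides of $\mtX$ is what makes the product $\|\mtM\|_{2}\|\mtbM\|_{\infty}$ (rather than, say, two spectral norms) the correct bound and is what yields the precise form of $\scxi(\stD,\stH)$ in \eqref{eq:stabilityConstant}.
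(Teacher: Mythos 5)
Your proposal is correct and follows essentially the same route as the paper's proof: the same triangle-inequality decomposition of the closed-loop recursion with the spectral-norm/infinity-norm pairing on the two sides of the state, the same layerwise bound $\|\fnPhi(\mtX;\mtS,\stH)\|\leq C_{\fnPhi}\|\mtX\|$ (which the paper isolates as Lemmas~\ref{lemma:boundFilter} and~\ref{lemma:boundGCNN}), and the same geometric-series summation yielding the identical constants $\beta_{0}=\|\mtX(0)\|/(1-\scxi)$ and $\beta_{1}=\|\mtB\|_{2}\|\mtbB\|_{\infty}/(1-\scxi)$. No gaps.
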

\begin{proof}
\ifundefined{arXiv}
    See~\ref{app:stability}.
\else
    See Appendix~\ref{app:stability}.
\fi
\end{proof}

Theorem~\ref{thm:stability} is a sufficient condition for the closed-loop system to be input-state stable. The learned filters affect the constant $C_{\fnPhi}$ such that the smaller the filters $C_{\fnH_{\ell}}$ the smaller $C_{\fnPhi}$ and thus $\scxi$. Therefore, a penalty on the size of the filters, see \eqref{eq:graphFilterNorm}, can be added to the objective function of \eqref{eq:ERM} to obtain GNNs with a controlled value of $C_{\fnPhi}$ and therefore with a smaller stability constant $\scxi$. The condition on the nonlinearity is mild and is satisfied by the most popular nonlinearities ($\ReLU$, $\tanh$, $\sigmoid$, etc.). It is observed that the sufficient condition requires $\|\mtA\|_{2}\|\mtbA\|_{\infty} < 1$, which implies that the system is open-loop stable. In many physical systems such as power networks, it is possible to design stabilizing controllers. This implies that once the system has been stabilized a GNN-based controller can then be learned to minimize the quadratic cost.


\subsection{Trajectory deviation} \label{subsec:robustness}

It often happens that one does not have direct access to the matrices $\stD$ that characterize the distributed linear system and thus they should be estimated. Alternatively, sometimes the system description may change slightly from the training to the testing phase. Therefore, it is essential to study the impact of the inaccurate knowledge of these matrices on the trajectory.

Consider a network system on a graph $\stG$ with the linear dynamics \eqref{eq:linearDynamic} and described by the set of matrices $\stD$. Assume that these matrices are unknown and, instead, access to estimates of these matrices is provided. These estimates are denoted by $\sthD = \{\mthS, \mthA, \mthbA, \mthB, \mthbB\}$ where $\mthS \in \fdR^{N \times N}$ is the estimate of the support matrix (i.e. the exact graph support is unknown), $\mthA \in \fdR^{N \times N}$ and $\mthbA \in \fdR^{F \times F}$ are the estimates of the system matrices, and $\mtbB \in \fdR^{N \times N}$ and $\mthbB \in \fdR^{G \times F}$ are the estimates of the control matrices. It is evident that the trajectory $\{\mthX(t)\}$ on the linear dynamical network $\sthD$ could be noticeably different from $\{\mtX(t)\}$, the one obtained from the system described by $\stD$.

The goal is to characterize how the difference in the systems $\stD$ and $\sthD$ impacts their respective trajectories $\{\mtX(t)\}$ and $\{\mthX(t)\}$. Towards this end, a notion of distance between the system matrices is first defined.
%
%
\begin{definition}[Distance between systems] \label{def:systDistance}
    Given the system matrices $\stD$ and $\sthD$, the \emph{distance between system descriptions} is defined as
    \begin{equation} \label{eq:systDistance}
        \fnd(\stD,\sthD) = \sceps,
    \end{equation}
    where $\sceps > 0$ is the smallest number such that
    \begin{equation} \label{eq:systDistanceDef}
    \begin{gathered}
        \| \mtS - \mthS\|_{2} \leq \sceps \ , \ \|\mtA - \mthA\|_{2} \leq \sceps \ , \ \| \mtbA - \mthbA\|_{\infty} \leq \sceps, \\
        \| \mtB - \mthB\|_{2} \leq \sceps \ , \ \| \mtbB - \mthbB \|_{\infty} \leq \sceps.
    \end{gathered}
    \end{equation}
\end{definition}
%
%

In other words, Definition \ref{def:systDistance} determines the distance between two system descriptions as the maximum norm difference in the constitutive matrix norms, with matrices on the graph domain being determined by the spectral norm $\|\cdot\|_{2}$, and matrices on the feature domain being determined by the infinity norm $\|\cdot\|_{\infty}$.

First, a result on how the input-state stability of the closed-loop system is affected by the distance between $\stD$ and $\sthD$ is obtained.
%
%
\begin{proposition}[Change in input-state stability]
    \label{prop:stabilityChange}
    Consider two systems described by the sets of matrices $\stD$ and $\sthD$. Let these systems be controlled by a GNN \eqref{eq:GCNN} consisting of $L$ layers of filters $\fnH_{\ell}(\cdot;\cdot,\stH)$ with $F_{\ell}$ features and $K_{\ell}$ filter taps each. Let the nonlinearity $\sigma(\cdot)$ be such that $|\fnsigma(a) - \fnsigma(b)| \leq |a-b|$ and $\fnsigma(0)=0$. Then, it holds that
    \begin{equation} \label{eq:stabilityChange}
        \big| \scxi - \schxi \big| \leq \schC_{\scxi} \ \fnd(\stD, \sthD),
    \end{equation}
    where $\scxi = \scxi(\stD, \stH)$ and $\schxi = \scxi (\sthD, \stH)$ are the stability constants of the system $\stD$ and $\sthD$, respectively, and where
    \begin{equation} \label{eq:Cxi}
        \schC_{\scxi} =  \| \mtA \|_{2} + \| \mthbA \|_{\infty} + \scC_{\fnPhi} \big(  \| \mtB \|_{2}+\| \mthbB \|_{\infty} \big),
    \end{equation}
    with $C_{\fnPhi} = \prod_{\ell=1}^{L} C_{\fnH_{\ell}}$ for $C_{\fnH_{\ell}}$ the size of the $\ell^{\text{th}}$ filter, see \eqref{eq:graphFilterNorm}.
\end{proposition}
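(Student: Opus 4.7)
The plan is to bound $|\xi - \hat\xi|$ by a direct algebraic manipulation that exploits the product structure of the stability constant \eqref{eq:stabilityConstant}. First I would write out the difference
\[
\xi - \hat\xi = \bigl(\|\mtA\|_2\|\mtbA\|_\infty - \|\mthA\|_2\|\mthbA\|_\infty\bigr) + C_{\fnPhi}\bigl(\|\mtB\|_2\|\mtbB\|_\infty - \|\mthB\|_2\|\mthbB\|_\infty\bigr),
\]
noting that the prefactor $C_{\fnPhi}$ is identical in $\xi$ and $\hat\xi$: by the discussion following \eqref{eq:freqResponse}, the filter-size bound \eqref{eq:graphFilterNorm} is computed over a fixed spectral interval $[\lambda_l,\lambda_h]$ chosen large enough to contain the eigenvalues of both $\mtS$ and $\mthS$, so $C_{\fnPhi}$ depends only on the learned taps $\stH$.

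Next I would split each product-of-norms difference using the ``add-and-subtract'' identity $ab-cd=a(b-d)+d(a-c)$, choosing the split so that the surviving factors are the ones that appear in the target constant $\schC_{\scxi}$. Concretely,
\[
\|\mtA\|_2\|\mtbA\|_\infty - \|\mthA\|_2\|\mthbA\|_\infty
= \|\mtA\|_2\bigl(\|\mtbA\|_\infty-\|\mthbA\|_\infty\bigr) + \|\mthbA\|_\infty\bigl(\|\mtA\|_2-\|\mthA\|_2\bigr),
\]
and analogously for the $\mtB,\mtbB$ pair. Applying the reverse triangle inequality $\bigl|\|X\|-\|Y\|\bigr|\leq\|X-Y\|$ in the appropriate norms then gives
\[
\bigl|\|\mtbA\|_\infty-\|\mthbA\|_\infty\bigr|\leq\|\mtbA-\mthbA\|_\infty\leq\sceps,
\qquad
\bigl|\|\mtA\|_2-\|\mthA\|_2\bigr|\leq\|\mtA-\mthA\|_2\leq\sceps,
\]
by Definition~\ref{def:systDistance}, and similarly for $\mtB$ and $\mtbB$. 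Collecting terms via the triangle inequality yields
\[
|\xi-\hat\xi|\leq\bigl(\|\mtA\|_2+\|\mthbA\|_\infty\bigr)\sceps + C_{\fnPhi}\bigl(\|\mtB\|_2+\|\mthbB\|_\infty\bigr)\sceps=\schC_{\scxi}\,\fnd(\stD,\sthD),
\]
which is exactly \eqref{eq:stabilityChange}.

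There is no real obstacle: the argument is a clean application of the reverse triangle inequality combined with one algebraic identity, and the hypotheses on the nonlinearity ($|\sigma(a)-\sigma(b)|\leq|a-b|$, $\sigma(0)=0$) are not actually needed here — they are carried along presumably because the same hypotheses are reused in the subsequent trajectory-deviation bound where $\sigma$ is propagated through successive layers. The only subtlety worth checking is the asymmetric choice of constants in $\schC_{\scxi}$, namely that it involves $\|\mtA\|_2$ and $\|\mtB\|_2$ (the ``true'' system) but $\|\mthbA\|_\infty$ and $\|\mthbB\|_\infty$ (the ``estimated'' feature matrices); this is dictated by the particular split of $ab-cd$ chosen above, and the opposite split would produce a symmetric alternative. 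No other results from the excerpt are needed.
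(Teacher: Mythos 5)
Your proposal is correct and follows essentially the same route as the paper: the same add-and-subtract factorization of each product of norms (leaving $\|\mtA\|_{2}$ and $\|\mthbA\|_{\infty}$ as the surviving factors), the reverse triangle inequality combined with Definition~\ref{def:systDistance}, and collection of terms into $\schC_{\scxi}$. Your side remarks — that the hypotheses on $\fnsigma$ are not used here and that the asymmetry of $\schC_{\scxi}$ is an artifact of the chosen split — are also accurate.
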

\begin{proof}
\ifundefined{arXiv}
    See~\ref{app:stability}.
\else
    See Appendix~\ref{app:stability}.
\fi
\end{proof}
%
Proposition~\ref{prop:stabilityChange} states that the difference in the stability constants between the system $\stD$ and its estimate $\sthD$ depends on the distance $\fnd(\stD,\sthD)$ between them, on the system matrices of both $\stD$ and $\sthD$, and on the learned filters through $\scC_{\fnPhi}$. If the matrix description of $\stD$ is inaccessible, then Def.~\ref{def:systDistance} can be leveraged to replace $\|\mtA\|_{2}$ and $\|\mtB\|_{2}$ in \eqref{eq:Cxi} by the upper bounds $\|\mtA\|_{2} \leq \|\mthA\|_{2} + \fnd(\stD,\sthD)$ and $\|\mtB\|_{2} \leq \|\mthB\|_{2} + \fnd(\stD, \sthD)$, respectively. The same holds if $\sthD$ is not known but $\stD$ is. It is also noted that, for the case when $F=G=1$, it follows from the proof that $\schC_{\scxi} = 1+\scC_{\fnPhi}$ and the bound is proportional to the distance $\fnd(\stD,\sthD)$; see \ifundefined{arXiv}~\ref{app:stability}.\else Appendix~\ref{app:stability}. \fi

Next, the goal is to characterize the deviation in the trajectories, namely $\|\mtX(t) - \mthX(t)\|$, as a function of how different the systems $\stD$ and $\sthD$ are. In this context, a controller $\fnPhi$ is acceptable if the resulting closed-loop trajectories of two different systems are similar as long as the systems themselves are similar. This is the case for GNN-based distributed controllers as shown next.
%
%
\begin{theorem}[Bound on trajectory deviation]
    \label{thm:robust}
    Consider two systems described by the sets of matrices $\stD$ and $\sthD$. Let these systems be controlled by a GNN \eqref{eq:GCNN} consisting of $L$ layers of filters $\fnH_{\ell}(\cdot;\cdot,\stH)$ with $F_{\ell}$ features and $K_{\ell}$ filter taps each. Let the nonlinearity $\fnsigma(\cdot)$ be such that $|\fnsigma(a) - \fnsigma(b)| \leq |a-b|$ and $\fnsigma(0)=0$. Then, it holds that
    \begin{equation} \label{eq:robust}
        \big\| \mtX(t) - \mthX(t) \big\| \leq \schC_{\fnPhi} \schC_{t} \| \mtX(0)\|\ \fnd(\stD,\sthD),
    \end{equation}
    with $\schC_{\fnPhi} =  \schC_{\scxi} + C_{\fnPhi} \scGamma_{\fnPhi} \|\mthB\|_{2} \|\mthbB\|_{\infty} (1+8\sqrt{N})$ for $\schC_{\scxi}$ as in \eqref{eq:Cxi}, $C_{\fnPhi} = \prod_{\ell=1}^{L}C_{\fnH_{\ell}}$ and $\scGamma_{\fnPhi} = \sum_{\ell=1}^{L} (\scGamma_{\fnH_{\ell}}/C_{\fnH_{\ell}})$ for $C_{\fnH_{\ell}}$ and $\scGamma_{\fnH_{\ell}}$ the size and Lipschitz constant of the $\ell^{\text{th}}$ filter, respectively, see \eqref{eq:graphFilterNorm} and \eqref{eq:graphFilterLipzchitz}; and with $\schC_{t}$ such that $\schC_{0} = 0$ and
    \begin{equation} \label{eq:Ct}
        \schC_{t} = t \max\{\scxi, \schxi\}^{t-1}
    \end{equation}
    for $t \geq 1$, where $\scxi$ and $\schxi$ are the stability constants of the systems $\stD$ and $\sthD$, respectively, as in \eqref{eq:stabilityConstant}.
\end{theorem}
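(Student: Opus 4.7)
The plan is to proceed by induction on $t$, bounding the trajectory-difference increment using a combination of three facts: (i) the linear dynamics, (ii) the Lipschitz continuity of the GNN controller in its input, and (iii) the stability of the GNN to perturbations in the graph support matrix. The base case $t=0$ is immediate, since $\mtX(0)=\mthX(0)$ by assumption and $\schC_{0}=0$, so both sides of \eqref{eq:robust} vanish.

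For the inductive step, I would write
\begin{equation*}
    \mtX(t+1)-\mthX(t+1)
    = \big[\mtA\mtX(t)\mtbA - \mthA\mthX(t)\mthbA\big]
    + \big[\mtB\,\fnPhi(\mtX(t);\mtS,\stH)\,\mtbB - \mthB\,\fnPhi(\mthX(t);\mthS,\stH)\,\mthbB\big]
\end{equation*}
and split each bracket into a term that depends on $\mtX(t)-\mthX(t)$ (propagating the previous discrepancy) and terms that involve only the matrix perturbations. Concretely, for the first bracket I would use the telescoping expansion $\mtA\mtX\mtbA-\mthA\mthX\mthbA = \mtA(\mtX-\mthX)\mtbA + (\mtA-\mthA)\mthX\mtbA + \mthA\mthX(\mtbA-\mthbA)$, and for the second bracket, insert $\pm \mtB\fnPhi(\mthX;\mtS,\stH)\mtbB$ and $\pm \mtB\fnPhi(\mthX;\mthS,\stH)\mtbB$ to separate (a) the input perturbation of the GNN, (b) the graph perturbation of the GNN, and (c) the perturbations in $\mtB,\mtbB$.

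I would then bound each piece using the operator-norm inequality $\|\mtM_1 \mtZ \mtM_2\|_{2,1}\leq \|\mtM_1\|_2\|\mtM_2\|_\infty\|\mtZ\|_{2,1}$, Definition~\ref{def:systDistance} to replace the matrix differences by $\varepsilon=\fnd(\stD,\sthD)$, the Lipschitz property of the GNN in its input (which gives the factor $C_\fnPhi$), and the GNN graph-stability result of \cite{Gama20-Stability} that supplies the $C_\fnPhi \scGamma_\fnPhi(1+8\sqrt{N})$ term when $\mtS$ is replaced by $\mthS$. Collecting all non-propagating contributions produces a coefficient matching $\schC_\fnPhi$ (once paired with the norms $\|\mthB\|_2\|\mthbB\|_\infty$ and the already-present constants in $\schC_\xi$). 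The propagating contribution, on the other hand, is controlled by $\xi$ as in Theorem~\ref{thm:stability}. This leaves a one-step recursion of the form
\begin{equation*}
    \big\|\mtX(t+1)-\mthX(t+1)\big\| \leq \xi\,\big\|\mtX(t)-\mthX(t)\big\| + \schC_\fnPhi\,\varepsilon\,\big\|\mthX(t)\big\|.
\end{equation*}

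Unrolling the recursion and using the stability bound $\|\mthX(t)\|\leq \hat{\xi}^{t}\|\mtX(0)\|$ (obtained from Theorem~\ref{thm:stability} applied to $\sthD$, noting $\mtX(0)=\mthX(0)$), I obtain the telescoped estimate $\sum_{k=0}^{t-1}\xi^{\,t-1-k}\hat{\xi}^{\,k}\leq t\max\{\xi,\hat\xi\}^{t-1}=\schC_t$, which delivers exactly \eqref{eq:robust}. The main technical obstacle is the graph-perturbation term: tracking the correct dependence of $\|\fnPhi(\mthX;\mtS,\stH)-\fnPhi(\mthX;\mthS,\stH)\|$ on $\varepsilon$ so that it produces the $C_\fnPhi\scGamma_\fnPhi(1+8\sqrt{N})$ factor, without double-counting the $C_\fnPhi$ contribution already absorbed into the input-Lipschitz piece, and ensuring the input norm appearing there is $\|\mthX(t)\|$ rather than $\|\mtX(t)\|$ so that the clean geometric sum works. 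A secondary subtlety is verifying that the ``fat'' constant $\schC_\xi$ (Proposition~\ref{prop:stabilityChange}) is the right bookkeeping device for collecting the non-GNN perturbation terms, which is why the proof requires the same compatibility assumptions on $\fnsigma$ as Proposition~\ref{prop:stabilityChange}.
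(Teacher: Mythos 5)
Your proposal is correct and follows essentially the same route as the paper's proof: the same telescoping decomposition of the error dynamics into a discrepancy-propagating term plus matrix-perturbation and graph-perturbation terms, the same use of the GNN output bound, layer-peeling input-Lipschitz bound, and graph-stability lemma, and the same unrolled one-step recursion bounded via $\sum_{\tau}\scxi^{\,t-1-\tau}\schxi^{\,\tau}\leq t\max\{\scxi,\schxi\}^{t-1}$. The only (immaterial) difference is that your telescoping assigns the propagating factor to $\scxi$ and the forcing term to $\|\mthX(t)\|$ while the paper does the reverse; both work because the final constant uses $\max\{\scxi,\schxi\}$.
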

\begin{proof}
\ifundefined{arXiv}
    See~\ref{app:robust}.
\else
    See Appendix~\ref{app:robust}.
\fi
\end{proof}
%
Theorem \ref{thm:robust} states that, for a linear dynamical network system under a GNN-based distributed controller, the change in trajectory between the system $\stD$ and its estimated description $\sthD$ depends on the value of $\schC_{\fnPhi}$ that is independent of time, on the value of $\schC_{t}$ that is time-varying, and on their distance $\fnd(\stD,\sthD)$. The value of $\schC_{\fnPhi}$ is affected by the given system (through matrices in the estimated system $\sthD$ and the number of nodes $N$) and the resulting trained filters in the GNN (through $C_{\fnPhi}$ and $\scGamma_{\fnPhi}$). The value of $\schC_{t}$ is determined by the stability constants $\scxi$ and $\schxi$, and becomes larger as time passes if $\max\{\scxi, \schxi\} \geq 1$, but otherwise decreases for large $t$. Recall that $\scxi$ can be estimated from $\schxi$ by leveraging Proposition~\ref{prop:stabilityChange}. It is noted that the constants $\schC_{\fnPhi}$ and $\schC_{t}$ can be affected by judicious training. For example, by penalizing the size of the filters $C_{\fnH_{\ell}}$ and their Lipschitz constant $\scGamma_{\fnH_{\ell}}$ during training, the learned GNN-based controller can be forced to be more stable, see Section~\ref{sec:sims} for concrete examples.

For the particular case when the closed-loop system and its estimate are guaranteed to be input-state stable, the following corollary can be stated.
%
%
\begin{corollary}[Bound on trajectory deviation for stable systems]
    \label{cor:robustStable}
    Consider a system $\stD$ and its estimate $\sthD$ such that both satisfy Theorem~\ref{thm:stability}. Then, it holds that
    \begin{equation} \label{eq:robustStable}
        \big\| \mtX(t) - \mthX(t) \big\| \leq\schC \| \mtX(0)\|\ \fnd(\stD, \sthD),
    \end{equation}
    where $\schC = -e^{-1}\schC_{\fnPhi}/(\max\{\scxi,\schxi\} \times \log(\max\{\scxi,\schxi\}))$ and $\schC_{\fnPhi}$ is given in Theorem~\ref{thm:robust}. Furthermore, it holds that
    \begin{equation} \label{eq:robustStableLimit}
    \lim_{t \to \infty }\big\| \mtX(t) - \mthX(t) \big\| = 0.
    \end{equation}
\end{corollary}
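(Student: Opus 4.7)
The plan is to start from Theorem~\ref{thm:robust}, which already gives
\[
    \big\| \mtX(t) - \mthX(t) \big\| \leq \schC_{\fnPhi}\, \schC_{t}\, \|\mtX(0)\|\, \fnd(\stD,\sthD),
\]
with $\schC_{t} = t\, \mu^{t-1}$ and $\mu := \max\{\scxi,\schxi\}$. Under the hypothesis that both systems satisfy Theorem~\ref{thm:stability}, we have $\mu < 1$, so the only remaining task is to bound the scalar sequence $\schC_{t}$ uniformly in $t$ and then take its limit.

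First, I would prove the uniform bound by maximizing $f(t) = t\, \mu^{t-1}$ viewed as a differentiable function of $t \in [0,\infty)$ (the discrete maximum is then bounded by the continuous maximum). Differentiating gives $f'(t) = \mu^{t-1}(1 + t \log \mu)$, and since $\log \mu < 0$, the unique stationary point is $t^{\opt} = -1/\log \mu > 0$. A standard sign check shows this is the global maximum. Substituting back and using the identity $\mu^{-1/\log \mu} = e^{-1}$ yields
\[
    f(t^{\opt}) \;=\; -\frac{1}{\log \mu}\cdot \frac{\mu^{-1/\log\mu}}{\mu} \;=\; -\frac{e^{-1}}{\mu \log \mu}.
\]
Multiplying by $\schC_{\fnPhi}$ gives exactly the constant $\schC$ in the statement, and plugging back into the bound from Theorem~\ref{thm:robust} establishes \eqref{eq:robustStable}.

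For the limit \eqref{eq:robustStableLimit}, I would simply observe that $\mu < 1$ implies $\lim_{t \to \infty} t\, \mu^{t-1} = 0$ (exponential decay dominating polynomial growth). Since $\schC_{\fnPhi}$, $\|\mtX(0)\|$, and $\fnd(\stD,\sthD)$ are all finite constants independent of $t$, the right-hand side of Theorem~\ref{thm:robust} vanishes and the non-negative left-hand side is squeezed to zero.

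There is no serious obstacle here: the corollary is essentially a one-variable calculus exercise applied to the time-dependent factor already isolated in Theorem~\ref{thm:robust}. The only point that deserves a line of justification is verifying that the continuous maximizer $t^{\opt}$ indeed yields a valid upper bound on the integer-indexed sequence $\{\schC_{t}\}_{t \in \mathbb{N}}$, which follows immediately because $\sup_{t \in \mathbb{N}} f(t) \leq \sup_{t \geq 0} f(t)$.
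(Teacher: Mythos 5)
Your argument is correct and follows essentially the same route as the paper's own proof: bound the time-dependent factor $\schC_{t} = t\,\max\{\scxi,\schxi\}^{t-1}$ by maximizing the continuous function $t \mapsto t\mu^{t-1}$ at $t = -1/\log\mu$, which yields exactly the constant $\schC$, and then note that $t\mu^{t-1} \to 0$ for $\mu<1$ to get the limit. Your added remarks (the derivative sign check, the identity $\mu^{-1/\log\mu} = e^{-1}$, and the observation that the continuous supremum dominates the integer-indexed one) are all correct and only make explicit what the paper leaves implicit.
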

\begin{proof}
\ifundefined{arXiv}
    See~\ref{app:robust}.
\else
    See Appendix~\ref{app:robust}.
\fi
\end{proof}
%
It follows from Corollary~\ref{cor:robustStable} that if a system and its estimate are guaranteed to be input-state stable, then the trajectory deviation between both systems is bounded by a constant that is proportional to the distance between them and is independent of time $t$. Furthermore, this deviation is guaranteed to go to zero as $t$ increases.


\section{Numerical Experiments} \label{sec:sims}



In this section, numerical simulations illustrate the performance of GNN-based controllers in a distributed linear-quadratic problem. In particular, problem \eqref{eq:distributedLQR} is solved with $F=G=1$ so that $\mtbA$ and $\mtbB$ become scalars that are subsumed into matrices $\mtA$ and $\mtB$, respectively.


\medskip\noindent \textbf{Problem setup.} The system has $N$ nodes placed uniformly at random on the $[0,1] \times [0,1]$ plane. Edges are drawn between the $5$-nearest neighbors of each node. The support matrix $\mtS$ is considered to be the adjacency matrix, normalized by the largest eigenvalue so that $\|\mtS\|_{2}=1$. The network system matrix $\mtA$ and network control matrix $\mtB$ share the same eigenvectors with $\mtS$ and the diagonal elements are chosen randomly with a standard Gaussian distribution and are normalized so that $\|\mtA\|_{2} = 0.995$ and $\|\mtB\|_{2} = 1$. The cost matrices are set to $\mtQ = \mtR = \mtI$. Trajectories of length $T=50$ are simulated. Unless otherwise specified, the networks have $N=50$ nodes.


\medskip\noindent \textbf{Controllers.} Five controllers are studied. (i: Optim) The optimal centralized controller is used as a baseline \cite[eq. (2.4-8)]{AndersonMoore89-LQR}. (ii: MLP) A centralized controller can be learned by using a multi-layer perceptron (MLP) with $NF_{\text{MLP}}$ units in the hidden layer, and $N$ units in the readout layer \cite{Capella03-DistributedNN}. (iii: D-MLP) As a comparative method, the learnable, distributed controller proposed in \cite{Huang05-LargeScaleDecentralized} is used; recall that this method learns a separate MLP for each node, particularly a hidden layer with $F_{\text{D-MLP}}$ units and a single output unit to estimate the control action of the node. (iv: GNN) A two-layer GNN \eqref{eq:GCNN} with $F_{1}$ features and $K_{1}$-order polynomials for the first layer and $F_{2}=1$ and $K_{2}=0$ for the second layer. (v: GF) A $K_{1}$-order polynomial graph filter with $F_{1}$ features \eqref{eq:graphFilter}, followed by a readout layer which is another graph filter with $F_{2}=1$ output features and $K_{2}=0$ filter taps, see \cite{Fattahi19-LQR}. For the nonlinear methods (ii)-(iv), the function $\tanh$ is applied pointwise between the first and the second layers.


\medskip\noindent \textbf{Training and evaluation.} The controllers (ii)-(v) are trained by solving the equivalent ERM problem \eqref{eq:ERM} over a generated training set consisting of $|\stT|=500$ initial states. The ADAM algorithm \cite{Kingma15-ADAM} with the learning rate $\mu$ and forgetting factors $0.9$ and $0.999$ is used to update the gradients over batches of $20$ trajectories. A validation stage leveraging a set of $50$ new, independent initial states is computed every $5$ training updates. After $30$ epochs of training, the parameters that exhibited the best performance during the validation stage are retained. The controllers are evaluated by computing the quadratic cost over trajectories obtained from a set of $50$ new, independent initial states. For ease of exposition, the resulting cost is normalized by the lower bound for the distributed linear-quadratic problem obtained in \cite{Fazelnia17-LowerBoundLQR}. The training and evaluation process is repeated for $100$ different realizations of the system matrices $\stD$. Median and standard deviation values of the normalized cost are reported.

\begin{table*}[t]
    \small
    \centering
    \caption{Normalized cost of the distributed controllers. (a) Distributed controller (iv: GNN) for $\mu = 0.01$. (b) Distributed controller (v: GF) for $\mu = 0.005$. Lower bound: $65 (\pm 2)$.}
    \subfloat[GNN (iv: GNN)]{
    \begin{tabular}{c|ccc}
        $F$/$K$ & $2$         &       $3$         &       $4$         \\ \hline
        $16$ & $1.1396 (\pm 0.0379)$ & $1.1311 (\pm 0.0338)$ & $\mathbf{1.1052 (\pm 0.0295)}$ \\
        $32$ & $1.1440 (\pm 0.0348)$ & $1.1286 (\pm 0.0275)$ & $1.1354 (\pm 0.0255)$ \\
        $64$ & $1.1409 (\pm 0.0356)$ & $1.1300 (\pm 0.0272)$ & $1.1196 (\pm 0.0323)$ \\
    \end{tabular}
    \label{subtab:hParam:GCNN}
    }
    \hfil
    \subfloat[Graph Filter (v: GF)]{
    \begin{tabular}{c|ccc}
        $F/K$ &      $2$         &       $3$         &       $4$         \\ \hline
        $16$ & $1.1716 (\pm 0.0319)$ & $1.1449 (\pm 0.0331)$ & $1.1295 (\pm 0.0289)$ \\
        $32$ & $1.1609 (\pm 0.0291)$ & $1.1385 (\pm 0.0358)$ & $1.1233 (\pm 0.0285)$ \\
        $64$ & $1.1466 (\pm 0.0361)$ & $1.1248 (\pm 0.0313)$ & $\mathbf{1.1175 (\pm 0.0251)}$ \\
    \end{tabular}
    \label{subtab:hParam:GF}
    }
    \label{tab:hParam}
\end{table*}


\medskip\noindent \textbf{Experiment 1: Design hyperparameters.} The first experiment studies the performance of the controllers (iv: GNN) and (v: GF) as a function of the number of features at the output of the first layer $F \in \{16,32,64\}$, and the order of the polynomial $K \in \{2, 3, 4\}$. The learning rate is chosen from the set $\mu \in \{0.005, 0.01, 0.05\}$ and the one yielding the best performance for each architecture is shown in Table~\ref{tab:hParam}. In general, the performance does not vary significantly as a function of the hyperparameters, with a difference of $3.8$ percentage points for (iv: GNN) and $5.4$ for (v: GF). From now on, the hyperparameter values are set to $F_{1} = 16$, $K_{1} = 4$ and $\mu = 0.01$ for (iv: GNN), and $F_{1} = 64$, $K_{1} = 4$ and $\mu = 0.005$ for (v: GF). The fact that $K_{1} = 4$ exhibits the best performance for both controllers evidences the importance of repeated communication with one-hop neighbors for collecting information farther away.


\medskip\noindent \textbf{Experiment 2: Comparison.} For the second experiment, the performance of the controllers (iv: GNN) and (v: GF) is compared to that of the centralized baselines (i: Optim) and (ii: MLP), and that of the distributed method (iii: D-MLP). The hyperparameters of (ii: MLP) and (iii: D-MLP) are set to $(F_{\text{MLP}},\mu) = (16, 0.005)$ and $(F_{\text{D-MLP}}, \mu) = (16, 0.01)$, respectively, chosen for yielding the best performance from the set $\{16, 32, 64\}$ for the features and $\{0.005, 0.01, 0.05\}$ for the learning rate. The controller (ii: MLP) learns $80,000$ parameters and the controller (iii: D-MLP) learns $3,200$, while (iv: GNN) learns $80$ parameters and (v: GF) learns $320$. The centralized controllers (i: Optim) and (ii: MLP) exhibit a normalized cost of $0.9961 (\pm 0.0001)$ and $0.9969 (\pm 0.0003)$, respectively. This shows that these two controllers are better than any possible distributed one. The distributed method (iii: D-MLP) yields a cost of $1.0999 (\pm 0.0167)$, $0.5$ percentage points better than (iv: GNN) which shows a cost of $1.1052(\pm0.0295)$ and $1.7$ percentage points better than (v: GF) which shows a cost of $1.1175 (\pm 0.0251)$. Overall, as expected, the centralized controllers perform better than the distributed ones. The performance of the controller (iii: D-MLP) is slightly better than (iv: D-MLP), possibly due to the fact that (iii: D-MLP) exhibits a larger representation space that can be successfully navigated given the rich training setting available in this simulation. It is observed in experiments~3~and~4, however, that this controller is not robust to changes in the underlying topology nor scales well, precisely due to the large number of parameters. Finally, it is observed that the nonlinear distributed controllers (iii) and (iv) outperform the linear one (v: GF).


\begin{figure*}[!t]
    \centering
    \subfloat[Stable open-loop system]{
        \includegraphics[width=0.475\textwidth]{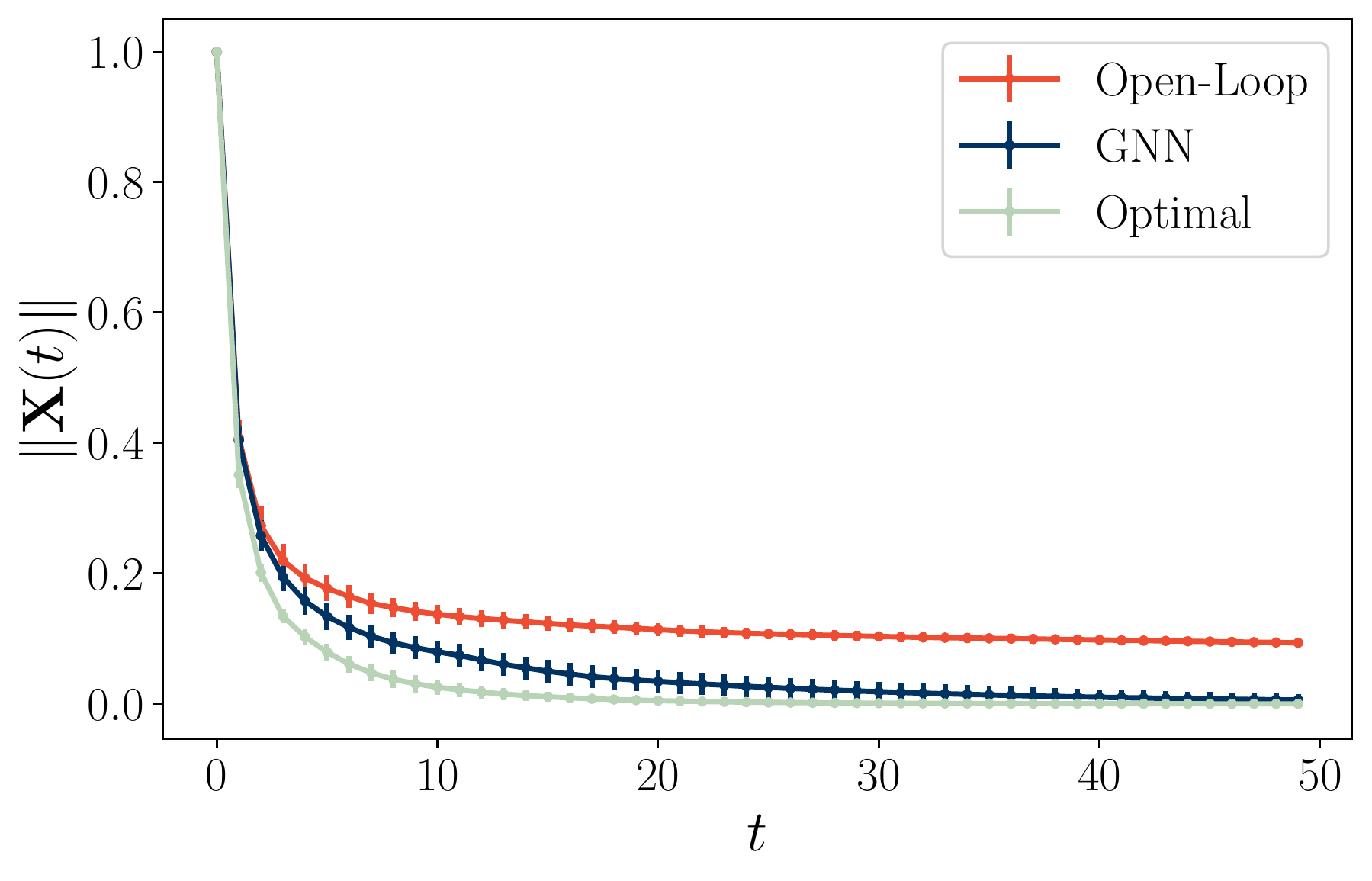}%
        \label{subfig:normStable}
    }
    \hfil
    \subfloat[Unstable open-loop system]{
        \includegraphics[width=0.475\textwidth]{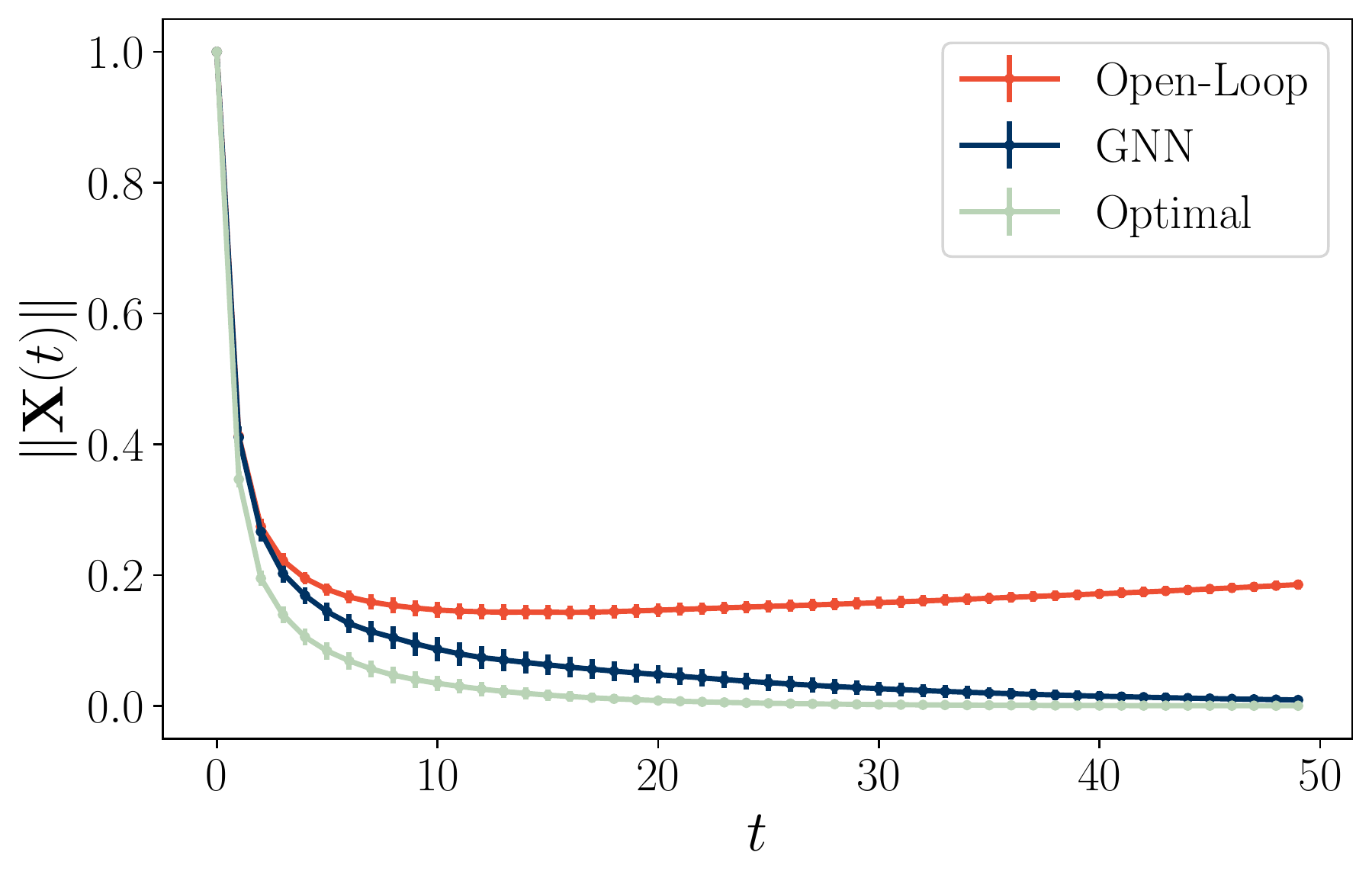}%
        \label{subfig:normUnstable}
    }
    \caption{Comparison with the open-loop system, showing the norm of the evolution of the state norm $\|\mtX(t)\|$ as a function of time $t$. (a) This is the case when the system is open-loop stable, i.e. $\|\mtA\|_{2} = 0.995$. It is observed that, while the trajectory is going to zero even in the absence of a controller (open-loop), the use of a GNN-based controller drives the state faster to zero. (b) Consider now an unstable open-loop system given by $\|\mtA\|_{2} = 1.01$. It is observed that the state does not go to zero in the absence of a controller, and that the GNN-based controller successfully drives the state to $0$.}
    \label{fig:normOpenLoop}
\end{figure*}

\medskip\noindent \textbf{Experiment 3: Comparison with open-loop systems.} In the third experiment, a comparison with an open-loop system is carried out. It is noted that, from choosing $\|\mtA\|_{2} = 0.995$, the resulting system is open-loop stable and, thus, the state will be driven to zero even in the absence of a controller. In this context, the effect of the distributed controller should be such that it drives the states to zero faster than the open-loop case. The results shown in Fig.~\ref{subfig:normStable} indicate that the use of a GNN controller drives the state to zero faster than the open-loop, uncontrolled, system. This illustrates that the GNN controller is better than using no controller, also in the case where the open-loop system is already stable. This is also shown in the resulting cost, which for the open-loop system is $1.5961 (\pm 0.0837)$ while for the GNN controller is $1.1104 (\pm 0.0334)$.

Alternatively, the case of a system that is open-loop unstable is also considered. In this case, the norm of the system matrix is $\|\mtA\|_{2} = 1.01$. It is immediately observed in Fig.~\ref{subfig:normUnstable} that while the open-loop system tends to be unstable (the norm of $\|\mtX(t)\|$ grows as $t$ grows), the GNN controller effectively drives the state to zero.

More generally, an experiment of the normalized cost as a function of $\|\mtA\|_{2}$ is run. This experiment helps visualize the transition between systems that are open-loop stable and systems that are not. The norm of the system matrix $\|\mtA\|_{2}$ varies from $0.95$ to $1.01$. Results are shown in Fig.~\ref{fig:costOpenLoop}. It is evident that as $\|\mtA\|_{2}$ grows, the cost increases, showing that the system is increasingly harder to control. But, while the open-loop system cost seems to exponentially grow, the GNN controller manages to keep the cost low and, as seen in Fig.~\ref{subfig:normUnstable} it effectively drives the state to zero.

\begin{figure*}[!t]
    \centering
    \includegraphics[width=0.6\textwidth]{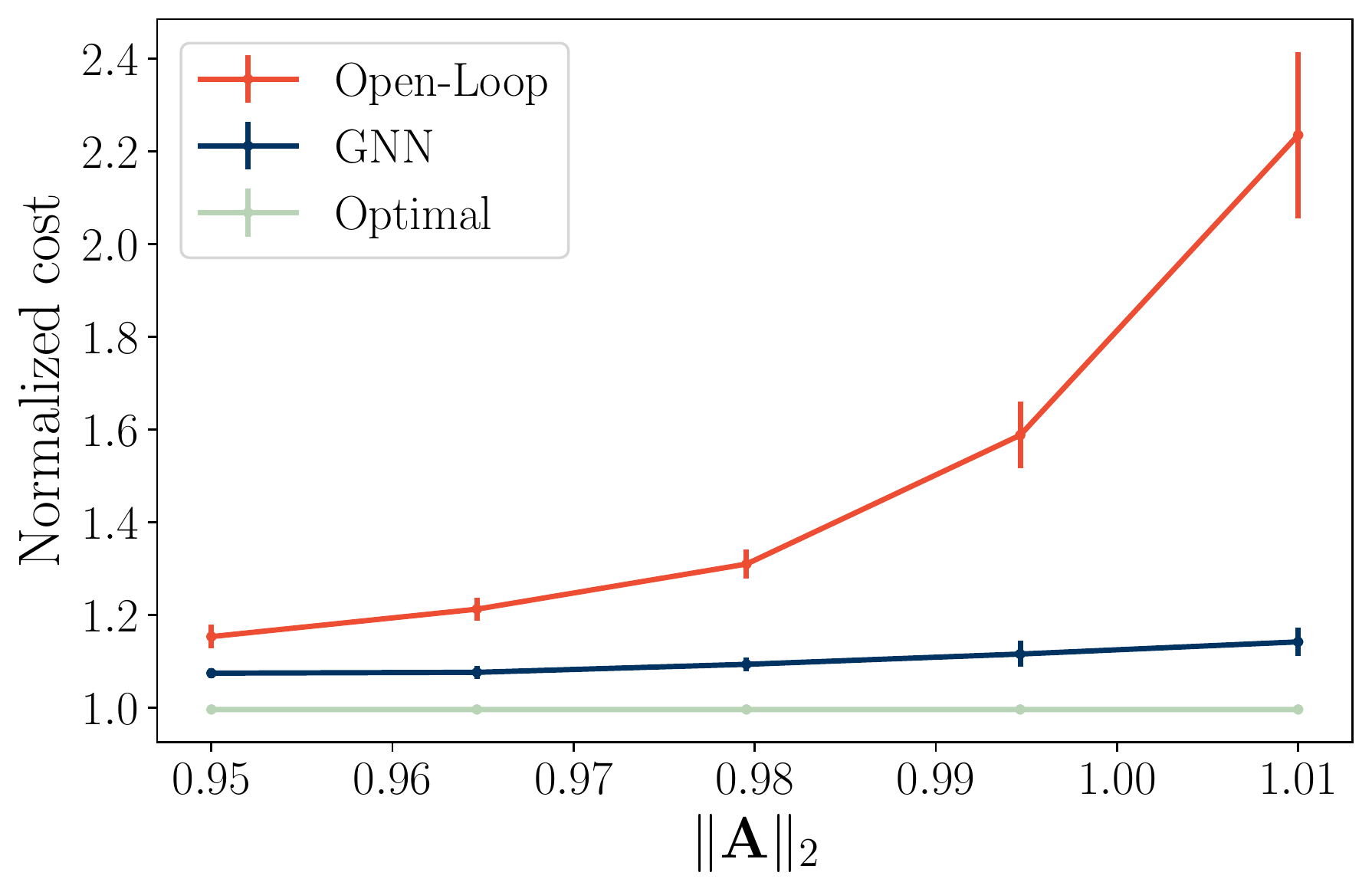}%
    \caption{Normalized cost as a function of the norm of $\|\mtA\|_{2}$. It is observed that the cost for the uncontrolled, open-loop system, grows exponentially with the norm of $\|\mtA\|_{2}$ as expected. The cost of the GNN-controller, however, grows only slightly with increasing values of $\|\mtA\|_{2}$.}
    \label{fig:costOpenLoop}
\end{figure*}


\medskip\noindent \textbf{Experiment 4: Unknown system matrices.} In the fourth experiment, the impact of an unknown system on both the stability (Prop.~\ref{prop:stabilityChange}) and the trajectory deviation (Thm.~\ref{thm:robust}) is studied. The controllers are trained on a system $\stD$, and then tested on another system $\sthD$ that is a random Gaussian noise perturbation such that $\fnd(\stD, \sthD) = \sceps$ for some predefined value of $\sceps$. It is observed in \eqref{eq:stabilityChange} that the change in stability is controlled by $C_{\fnPhi}=C_{\fnH_{1}}C_{\fnH_{2}}$, while \eqref{eq:robust} shows that the trajectory deviation can be controlled by lowering the value of the Lipschitz constants $\{\scGamma_{\fnH_{1}},\scGamma_{\fnH_{2}}\}$ and of the size $\{C_{\fnH_{1}}, C_{\fnH_{2}}\}$ of the filters involved. Therefore, the controller (iv: GNN) is trained with three different penalties: a penalty on the size $C_{\fnPhi}$, i.e. the objective function is $\fnJ ( \{\mtX(t)\}, \{\mtU(t)\}) + C_{\fnPhi}$, a penalty on the Lipschitz constants, i.e. $\fnJ ( \{\mtX(t)\}, \{\mtU(t)\}) +  (\scGamma_{\fnH_{1}}+ \scGamma_{\fnH_{2}})$, or a penalty on both the filter size and the Lipschitz constant, i.e. $\fnJ ( \{\mtX(t)\}, \{\mtU(t)\}) + 0.5 (\scGamma_{\fnH_{1}} + \scGamma_{\fnH_{2}} + C_{\fnPhi})$. This is indicated by the legend `GNN w/ size', `GNN w/ Lipschitz', and `GNN w/ both', respectively. The GCNN is also trained without penalties, for comparison, and labeled `GNN'.

The results are shown on Fig.~\ref{fig:unknownSystem}. First, the effects of the unknown system on the stability are analyzed, see Prop.~\ref{prop:stabilityChange}. Fig.~\ref{subfig:stability} shows that when training the GNN with a size penalty, the controller leads to a stable closed-loop system $100\%$ of the time for $\sceps < 0.05$, fails to control only $0.5\%$ of the trajectories for $\sceps = 0.0562$ and $10\%$ of the trajectories for $\sceps = 0.1$. When training with both penalties, the controller is able to lead to stable systems $100\%$ of the time for $\sceps = 0.01$, but then decays rapidly in its ability to stabilize the system as $\sceps$ grows. Training with Lipschitz penalty only leads to a controller that can stabilize about $92\%$ of the trajectories for $\sceps = 0.01$ and then falls to stabilizing about $80\%$ of the trajectories for $\sceps = 0.1$. This shows that training with a penalty on the size $C_{\fnPhi}$ of the GNN has the most impact on the ability of the learned distributed controller to stabilize the system, as predicted by Prop.~\ref{prop:stabilityChange} Finally, note that when training the GNN without penalties, the resulting controller stabilizes only $55\%$ of the trajectories on an unknown system.

\begin{figure*}[!t]
    \centering
    \subfloat[Stability]{
        \includegraphics[width=0.475\textwidth]{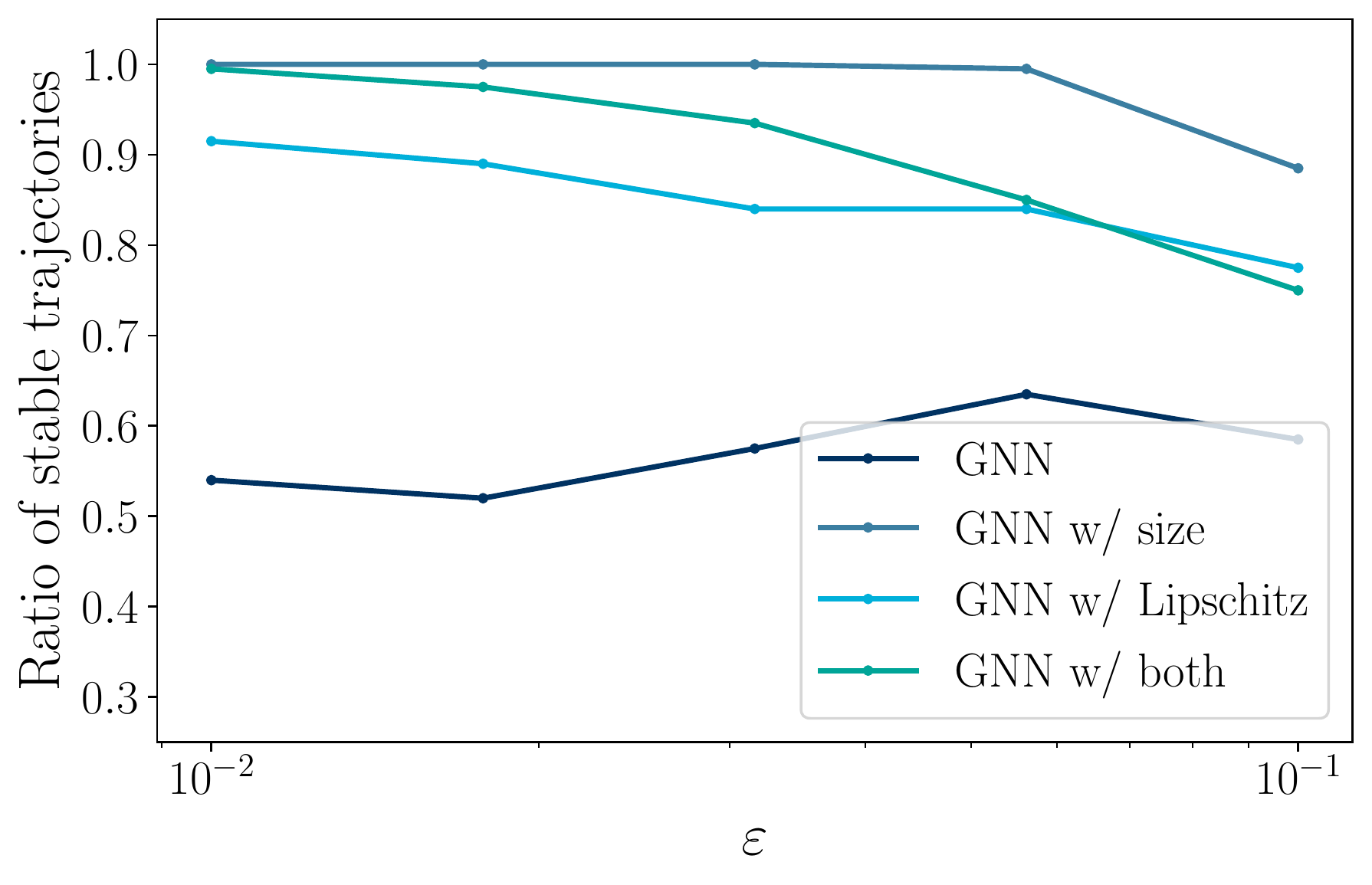}%
        \label{subfig:stability}
    }
    \hfil
    \subfloat[Trajectory deviation]{
        \includegraphics[width=0.475\textwidth]{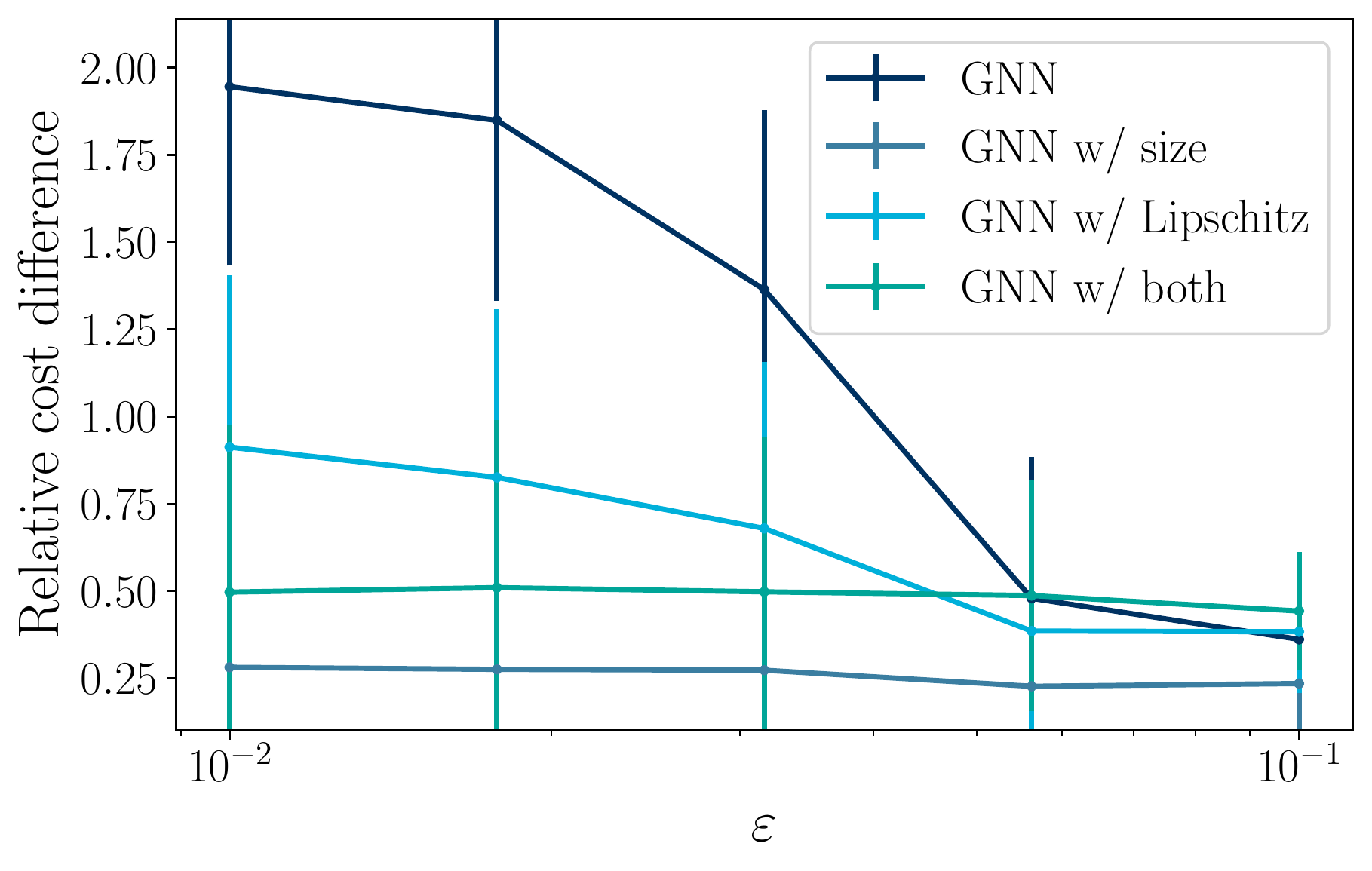}%
        \label{subfig:trajectoryDeviation}
    }
    \caption{Simulation results for a network with unknown system matrices as a function of the distance $\sceps$ between the systems, see \eqref{eq:systDistance}. (a) Ratio of stable trajectories as a function of $\sceps$; it is observed that when training with a penalty on the size $C_{\fnPhi}$ of the GNN, the resulting trajectories are stable for larger values of $\sceps$. (b) Cost difference of the controlled trajectories relative to the cost on the perfectly known system; it is observed that when training with a penalty on the size $C_{\fnPhi}$ of the GNN, the resulting controller achieves the lowest relative cost difference. The distributed controller (iii: D-MLP) and the centralized controller (ii: MLP) are not shown since they exhibit relative cost differences of approximately $7.5$ and $1400$, respectively, thus being out of scale; this is likely to their failure to control trajectories.}
    \label{fig:unknownSystem}
\end{figure*}

It is observed in Fig.~\ref{subfig:trajectoryDeviation} the relative difference between the cost obtained when testing on the system $\stD$ and that obtained when testing on system $\sthD$ for different values of system distance $\sceps$ among stable trajectories. First, it is noted that training with a penalty on the size of the GNN leads to a controller that is unaffected by changes in the system, exhibiting a relative cost difference of $0.25$ for all values of $\sceps$ under study. The other three controllers seem to improve in their relative difference as $\sceps$ grows, and this can be explained because the cost is being computed only among stable trajectories. This implies that, while $\sceps$ grows and less trajectories are being stabilized, the ones that remain do achieve good relative cost difference. Finally, it is noted that the distributed controller (iii: D-MLP) and the centralized learnable controller (ii: MLP) were also considered in this simulation. These controllers exhibited relative differences of approximately $7.5$ and $1400$, respectively, thus falling out of scale and not being shown in the figures. This results show that neither the (iii: D-MLP) nor the (ii: MLP) controllers are robust to changes in the system dynamics.


\begin{figure*}[!t]
    \centering
    \includegraphics[width=0.6\textwidth]{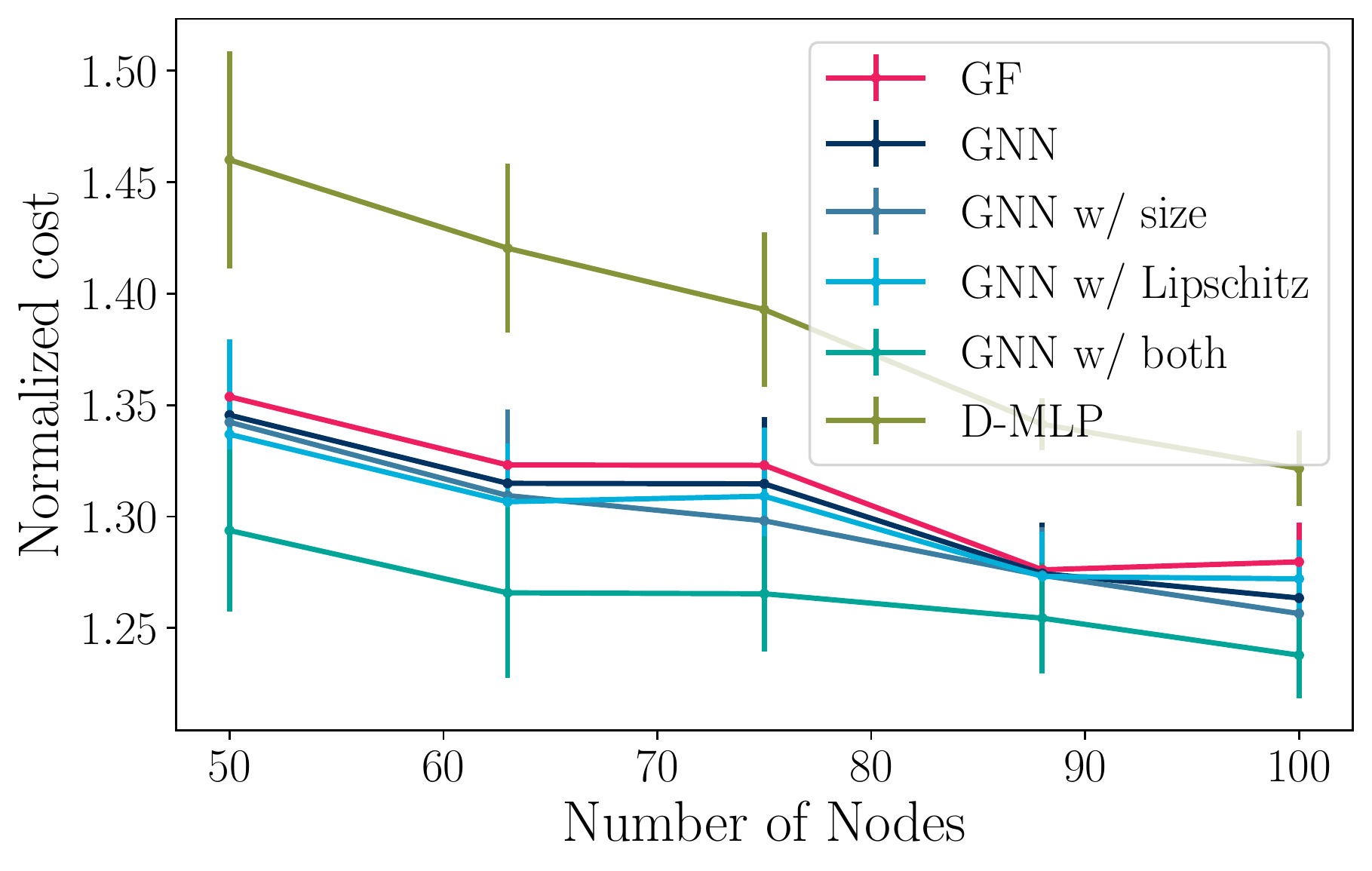}%
    \caption{Normalized cost for the stable trajectories of a GNN-based controller trained on $50$ nodes and tested on a larger network system. It is observed that training with penalties on both the Lipschitz constant and the size of the filters lead to best scalability results.}
    \label{fig:scalability}
\end{figure*}

\medskip\noindent \textbf{Experiment 5: Scalability.} In the last experiment, scalability of the distributed controllers (iii)-(v) is compared. These methods are trained on a system with $N=50$ nodes, and then at test time, are used on increasingly larger systems $N \in \{50, 63, 75, 87, 100\}$. The resulting costs of the stable trajectories are shown in Fig.~\ref{fig:scalability}. It is observed that, while the D-MLP performs better when tested on the same system as it was trained (see experiment~2), it does not transfer as well to larger systems. This is  likely to be because it assigns a different fully connected neural network controller to each component, so that, when tested on larger systems, it has to replicate this controller on other nodes and that may have a substantially different topological neighborhood. Controllers (iv: GNN) and (v: GF), on the other hand, successfully adapt to larger systems, even when trained on small ones. In particular, training with penalties on both the Lipschitz constant and the size of the filters leads to the best scalability results. It is noted that the centralized controller (ii: MLP) cannot transfer to systems with different number of nodes since the number of learned parameters depends on the number of nodes.


\section{Conclusion} \label{sec:conclusions}



This paper proposes to address the issue of the intractability of distributed optimal controllers by leveraging a nonlinear GNN-based parametrization. While the resulting controller is suboptimal, it exhibits several desirable properties such as distributed computation, efficiency and scalability. These controllers are applied to the distributed linear-quadratic problem, which can be cast as a self-supervised empirical risk minimization problem, and then solved by means of machine learning techniques. A sufficient condition for the resulting closed-loop system to be input-state stable is derived in terms of the filter taps of the GNN-based controller. Additionally, the trajectory deviation due to mismatch of the system descriptions is shown to also be controlled by the filter taps. Extensive simulations illustrate the satisfactory performance exhibited by GNN-based controllers as well as the ability to be trained to exhibit certain desirable characteristics such as an improved closed-loop stability or a smaller trajectory deviation under model mismatch. The resulting controller is also shown to scale to larger systems. Future research on the topic may involve the study of equilibrium points of a GNN-controlled system and their Lyapunov stability, the use of distributed optimization techniques to solve the self-supervised learning problem, and the adoption of other non-convolutional GNN-based architectures.




%


\appendix



\section{Auxiliary Results} \label{app:aux}

In this appendix four Lemmas that are useful for proving the theorems and propositions of Sections~\ref{app:stability}~and~\ref{app:robust} are included. The first two Lemmas establish an upper bound on the output of a graph filter (Lemma~\ref{lemma:boundFilter}) and a GNN (Lemma~\ref{lemma:boundGCNN}) as a function of the size of the filters involved. The following two lemmas determine the Lipschitz continuity with respect to the support matrix $\mtS$ of the graph filter (Lemma~\ref{lemma:lipschitzFilter}) and the GNN (Lemma~\ref{lemma:lipschitzGCNN}) as a function of the filter sizes and the Lipschitz constants.


\begin{lemma}[Bound on Graph Filter Output]
    \label{lemma:boundFilter}
    Let $\fnH: \fdR^{N \times F} \to \fdR^{N \times G}$ be a graph filter \eqref{eq:graphFilter} defined over a support matrix $\mtS \in \fdR^{N \times N}$. Let $\mtX \in \fdR^{N \times F}$ be any graph signal such that $\|\mtX\| < \infty$. Then,
    \begin{equation} \label{eq:boundFilter}
        \big\| \fnH(\mtX;\mtS,\stH)\big\| \leq C_{\fnH} \big\| \mtX\big\|,
    \end{equation}
    with $C_{\fnH}$ being the size of the filter bank, see \eqref{eq:graphFilterNorm}.
\end{lemma}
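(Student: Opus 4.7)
\textbf{Proof plan for Lemma~\ref{lemma:boundFilter}.} The plan is to decompose the filter output columnwise into scalar graph filters acting on scalar graph signals, bound each such term by its frequency response, and then reassemble the pieces using the definition of the $L_{2,1}$ norm.

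First, I would write the $g$-th column of $\mtY = \fnH(\mtX;\mtS,\stH)$ as $\vcy^{g} = \sum_{f=1}^{F} \fnH_{fg}(\mtS)\,\vcx^{f}$, where $\fnH_{fg}(\mtS) = \sum_{k=0}^{K}[\mtH_{k}]_{fg}\,\mtS^{k}$ is a univariate polynomial in $\mtS$ with scalar coefficients. This recovers exactly the ``bank of $FG$ scalar graph filters'' interpretation cited in the excerpt, so that the frequency response of the $(f,g)$ component is the polynomial $h_{fg}(\lambda) = \sum_{k=0}^{K}[\mtH_{k}]_{fg}\lambda^{k}$.

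Next I would bound $\|\fnH_{fg}(\mtS)\,\vcx^{f}\|_{2}$ by $[\mtC_{\fnH}]_{fg}\,\|\vcx^{f}\|_{2}$. This is the step I expect to be the only delicate one: it uses the fact that, under the standing GSP assumption that $\mtS$ is normal with spectrum contained in $[\lambda_{l},\lambda_{h}]$, the spectral norm of any polynomial $p(\mtS)$ equals $\max_{\lambda\in[\lambda_{l},\lambda_{h}]}|p(\lambda)|$, which by definition \eqref{eq:graphFilterNorm} is $[\mtC_{\fnH}]_{fg}$.

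Then I would combine the column-wise bounds with the triangle inequality,
\begin{equation*}
\|\vcy^{g}\|_{2} \;\le\; \sum_{f=1}^{F}\|\fnH_{fg}(\mtS)\,\vcx^{f}\|_{2} \;\le\; \sum_{f=1}^{F}[\mtC_{\fnH}]_{fg}\,\|\vcx^{f}\|_{2},
\end{equation*}
and sum over $g$ using the definition $\|\mtY\| = \sum_{g=1}^{G}\|\vcy^{g}\|_{2}$ from \eqref{eq:graphSignalNorm}. Swapping the order of summation gives
\begin{equation*}
\|\mtY\| \;\le\; \sum_{f=1}^{F}\!\Big(\sum_{g=1}^{G}[\mtC_{\fnH}]_{fg}\Big)\|\vcx^{f}\|_{2} \;\le\; \Big(\max_{f}\sum_{g=1}^{G}[\mtC_{\fnH}]_{fg}\Big)\sum_{f=1}^{F}\|\vcx^{f}\|_{2}.
\end{equation*}
Recognizing the parenthesized maximum as the infinity norm $\|\mtC_{\fnH}\|_{\infty}=C_{\fnH}$ and the remaining sum as $\|\mtX\|$ yields \eqref{eq:boundFilter} and closes the argument.
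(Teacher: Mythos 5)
Your proposal is correct and follows essentially the same route as the paper's proof: columnwise decomposition into the bank of scalar filters $\fnH_{fg}(\mtS)$, triangle inequality plus the spectral bound $\|\fnH_{fg}(\mtS)\|_{2}\le[\mtC_{\fnH}]_{fg}$, and then the swap of summations to recognize $\|\mtC_{\fnH}\|_{\infty}=C_{\fnH}$. The only difference is cosmetic --- you fold the submultiplicativity step and the frequency-response bound into one inequality and explicitly flag the normality assumption on $\mtS$ that the paper invokes only implicitly (``if $\lambda_{l}$ and $\lambda_{h}$ are the minimum and maximum eigenvalues of $\mtS$ as is usually the case''), which is if anything slightly more careful.
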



\begin{proof}
    Recall that the norm associated to the graph signal space is given by the $L_{2,1}$ entrywise matrix norm, see \eqref{eq:graphSignalNorm}. Then, the the graph signal size of the output $\mtY = \fnH(\mtX; \mtS, \stH)$ can be computed as
    \begin{equation} \label{eq:signalNormOut}
        \| \mtY \| = \sum_{g=1}^{G} \| \vcy^{g} \|_{2} = \sum_{g=1}^{G} \Big\| \sum_{f=1}^{F} \mtH_{fg}(\mtS) \vcx^{f} \Big\|_{2},
    \end{equation}
    where $\mtH_{fg}(\mtS) = \sum_{k=0}^{K} [\mtH_{k}]_{fg} \mtS^{k}$, see \eqref{eq:graphFilterNorm}, and where $\|\vcx\|_{2}$ represents the Euclidean norm on vectors.  One can apply the triangular inequality to \eqref{eq:signalNormOut} to obtain:
    \begin{equation} \label{eq:signalNormOutTrIneq}
        \| \mtY \| \leq \sum_{g=1}^{G} \sum_{f=1}^{F} \big\| \mtH_{fg}(\mtS) \vcx^{f} \big\|_{2}
    \end{equation}
    and noticing that the summation is comprised of Euclidean vector norms, the submultiplicativity of the corresponding matrix spectral norm can be used to arrive at
    \begin{equation}
        \|\mtY \| \leq \sum_{g=1}^{G} \sum_{f=1}^{F} \big\| \mtH_{fg}(\mtS) \big\|_{2} \big\| \vcx^{f} \big\|_{2},
    \end{equation}
    which, noting that the sum over $g$ only affects $\|\mtH_{fg}(\mtS)\|_{2}$, can be rearranged as
    \begin{equation} \label{eq:signalNormOutSub}
        \|\mtY \| \leq \sum_{f=1}^{F}  \big\| \vcx^{f} \big\|_{2} \sum_{g=1}^{G} \big\| \mtH_{fg}(\mtS) \big\|_{2}.
    \end{equation}
    Next, note that $\sum_{g=1}^{G} \|\mtH_{fg}(\mtS)\|_{2}$ is the sum of all the spectral norms of the filters along the $g$ dimension, thus the result is a scalar that depends on $f$ and is denoted with $C_{f}$ in this proof, i.e. $\sum_{g=1}^{G} \|\mtH_{fg}(\mtS)\|_{2}=C_{f}$. For each value of $f$, there is a different $C_{f}$, and it holds true that $C_{f} \leq \sup_{f=1,\ldots,F}$. This implies that $\sum_{g=1}^{G} \| \mtH_{fg}(\mtS) \|_{2} \leq \sup_{f=1,\ldots,F} \sum_{g=1}^{G} \| \mtH_{fg}(\mtS)\|_{2}$.

    From \eqref{eq:graphFilterNorm}, note that each element of the matrix $\mtC_{\fnH} \in \fdR^{F \times G}$ is given by $\max_{\lambda \in [\lambda_{l},\lambda_{h}]} |h_{fg}(\lambda)|$ for some chosen values of $[\lambda_{l},\lambda_{h}]$. Then, if $\lambda_{l}$ and $\lambda_{h}$ are the minimum and maximum eigenvalues of $\mtS$ as is usually the case, then it follows that $\sup_{f=1,\ldots,F} \sum_{g=1}^{G} \| \mtH_{fg}(\mtS)\|_{2} \leq \| \mtC_{\fnH}\|_{\infty} = C_{\fnH}$, see \eqref{eq:graphFilterNorm}. Recall that $\|\mtA\|_{\infty}$ is the infinity norm of matrices (i.e. maximum absolute row sum). Finally,  \eqref{eq:signalNormOutSub} can be upper bounded as
    \begin{equation} \label{eq:signalNormOutBg}
        \|\mtY \| \leq C_{\fnH} \sum_{f=1}^{F} \| \vcx^{f} \|_{2}.
    \end{equation}
    Noting that $\sum_{f=1}^{F} \|\vcx^{f}\|_{2} = \|\mtX\|$ completes the proof.
\end{proof}


\begin{lemma}[Bound on GNN Output]
    \label{lemma:boundGCNN}
    Let $\fnPhi(\cdot;\mtS,\stH): \fdR^{N \times F} \to \fdR^{N \times G}$ be a GNN \eqref{eq:GCNN} with $L$ layers defined over a support matrix $\mtS \in \fdR^{N \times N}$. Let the nonlinearity $\fnsigma(\cdot)$ be such that $|\fnsigma(x)| \leq C_{\sigma} |x|$ for all $x \in \fdR$, for some $C_{\sigma} > 0$. Then, for every graph signal $\mtX \in \fdR^{N \times F}$ with $\|\mtX\| < \infty$, it holds that
    \begin{equation} \label{eq:boundGCNN}
        \big\| \fnPhi(\mtX;\mtS,\stH)\big\| \leq C_{\sigma}^{L}C_{\fnPhi} \big\| \mtX\big\|,
    \end{equation}
    where $C_{\fnPhi}=\prod_{\ell=1}^{L} C_{\fnH_{\ell}}$ for $C_{\fnH_{\ell}}$ the size of the $\ell^{\text{th}}$ filter, see \eqref{eq:graphFilterNorm}.
\end{lemma}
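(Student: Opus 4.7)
The plan is to prove Lemma~\ref{lemma:boundGCNN} by a straightforward induction on the depth $L$ of the GNN, leveraging Lemma~\ref{lemma:boundFilter} for the linear filtering stage and the pointwise contraction hypothesis on $\fnsigma$ for the nonlinearity stage. Recall the recursion \eqref{eq:GCNN}: $\mtX_{\ell} = \fnsigma\big(\fnH_{\ell}(\mtX_{\ell-1};\mtS,\stH_{\ell})\big)$ with $\mtX_{0} = \mtX$ and $\fnPhi(\mtX;\mtS,\stH) = \mtX_{L}$. The inductive hypothesis to establish is $\|\mtX_{\ell}\| \leq C_{\sigma}^{\ell}\big(\prod_{\ell'=1}^{\ell} C_{\fnH_{\ell'}}\big)\|\mtX\|$.

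The first auxiliary step I would isolate is that the $L_{2,1}$ graph-signal norm \eqref{eq:graphSignalNorm} is compatible with pointwise nonlinearities: for any graph signal $\mtY \in \fdR^{N \times F_{\ell}}$, writing $\mtY$ columnwise as $[\vcy^{1},\ldots,\vcy^{F_{\ell}}]$, the bound $|\fnsigma(x)| \leq C_{\sigma}|x|$ applied entrywise yields $\|\fnsigma(\vcy^{f})\|_{2}^{2} = \sum_{i}|\fnsigma([\mtY]_{if})|^{2} \leq C_{\sigma}^{2}\|\vcy^{f}\|_{2}^{2}$, and summing over $f$ gives
\begin{equation*}
\big\|\fnsigma(\mtY)\big\| = \sum_{f=1}^{F_{\ell}} \|\fnsigma(\vcy^{f})\|_{2} \leq C_{\sigma}\sum_{f=1}^{F_{\ell}}\|\vcy^{f}\|_{2} = C_{\sigma}\|\mtY\|.
\end{equation*}

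Next I would carry out the inductive step. For any $\ell \in \{1,\ldots,L\}$, applying the nonlinearity bound just derived with $\mtY = \fnH_{\ell}(\mtX_{\ell-1};\mtS,\stH_{\ell})$ gives $\|\mtX_{\ell}\| \leq C_{\sigma}\|\fnH_{\ell}(\mtX_{\ell-1};\mtS,\stH_{\ell})\|$, and Lemma~\ref{lemma:boundFilter} then bounds the right-hand side by $C_{\sigma} C_{\fnH_{\ell}} \|\mtX_{\ell-1}\|$. Chaining these inequalities from $\ell = 1$ to $\ell = L$ with the base case $\mtX_{0} = \mtX$ yields
\begin{equation*}
\|\mtX_{L}\| \leq \Big(\prod_{\ell=1}^{L} C_{\sigma} C_{\fnH_{\ell}}\Big)\|\mtX\| = C_{\sigma}^{L}\, C_{\fnPhi}\, \|\mtX\|,
\end{equation*}
which is \eqref{eq:boundGCNN} since $\fnPhi(\mtX;\mtS,\stH) = \mtX_{L}$ and $C_{\fnPhi} = \prod_{\ell=1}^{L} C_{\fnH_{\ell}}$.

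There is no real obstacle here: the only subtle point worth emphasizing is that the bound holds because the chosen graph-signal norm is the $L_{2,1}$ entrywise norm, for which both the linear filtering step (by Lemma~\ref{lemma:boundFilter}) and the pointwise-Lipschitz nonlinearity behave submultiplicatively with constants $C_{\fnH_{\ell}}$ and $C_{\sigma}$, respectively. Finiteness of $\|\fnPhi(\mtX;\mtS,\stH)\|$ follows immediately from $\|\mtX\| < \infty$ and the finiteness of the product of constants, so no additional regularity assumptions are required.
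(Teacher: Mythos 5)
Your proof is correct and follows essentially the same route as the paper's: bound the pointwise nonlinearity by $C_{\sigma}$ in the $L_{2,1}$ norm, invoke Lemma~\ref{lemma:boundFilter} for each layer's filter, and iterate the per-layer bound $\|\mtX_{\ell}\| \leq C_{\sigma} C_{\fnH_{\ell}}\|\mtX_{\ell-1}\|$ down to $\mtX_{0}=\mtX$. The only difference is cosmetic---you isolate and verify the compatibility of the entrywise nonlinearity with the $L_{2,1}$ norm as an explicit preliminary step, which the paper does implicitly.
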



\begin{proof}
    Consider the computation of layer $\ell$
    \begin{equation} \label{eq:GCNNlayer}
        \mtX_{\ell} = \fnsigma \Big( \fnH_{\ell} \big( \mtX_{\ell-1};\mtS,\stH_{\ell} \big) \Big),
    \end{equation}
    whose norm is given by \eqref{eq:graphSignalNorm},
    \begin{equation} \label{eq:normGCNNlayer}
        \| \mtX_{\ell} \| = \sum_{g=1}^{F_{\ell}} \| \vcx_{\ell}^{g} \|_{2},
    \end{equation}
    with
    \begin{equation} \label{eq:GCNNlayerSingle}
        \vcx_{\ell}^{g} = \fnsigma \Big( \sum_{f=1}^{F_{\ell-1}} \mtH_{\ell fg}(\mtS) \vcx_{\ell-1}^{f} \Big),
    \end{equation}
    where $\mtH_{\ell fg}(\mtS) = \sum_{k=0}^{K_{\ell}} [\mtH_{\ell k}]_{fg} \mtS^{k}$ denotes the scalar-valued graph filter.

    Substituting \eqref{eq:GCNNlayerSingle} into \eqref{eq:normGCNNlayer} and using the hypothesis on the nonlinearity that $|\fnsigma(x)| \leq C_{\sigma}|x|$ for all $x$, the following upper bound on the norm of the output signal at layer $\ell$ is obtained:
    \begin{equation}
        \|\mtX_{\ell}\| \leq C_{\sigma} \sum_{g=1}^{F_{\ell}} \Big\| \sum_{f=1}^{F_{\ell-1}} \mtH_{\ell fg}(\mtS) \vcx_{\ell-1}^{f} \Big\|_{2},
    \end{equation}
    which is simply
    \begin{equation} \label{eq:boundGCNNlayerFilter}
        \|\mtX_{\ell}\| \leq C_{\sigma} \big\| \fnH_{\ell}(\mtX_{\ell-1};\mtS,\stH_{\ell}) \big\|.
    \end{equation}
    Now, using Lemma~\ref{lemma:boundFilter} on \eqref{eq:boundGCNNlayerFilter} yields
    \begin{equation} \label{eq:boundGCNNlayerFilterComplete}
        \|\mtX_{\ell}\| \leq C_{\sigma} C_{\fnH_{\ell}} \|\mtX_{\ell-1} \|.
    \end{equation}

    Repeating \eqref{eq:boundGCNNlayerFilterComplete} for all consecutive layers until reaching $\ell=1$ leads to
    \begin{equation} \label{eq:boundGCNNlayerFilterInitial}
    \|\mtX_{\ell}\| \leq C_{\sigma}^{\ell} \prod_{\ell'=1}^{\ell} C_{\fnH_{\ell'}} \|\mtX_{0} \|.
    \end{equation}
    By substituting $\ell=L$ into \eqref{eq:boundGCNNlayerFilterInitial} and recalling that $\mtX_{0} = \mtX$, $\fnPhi(\mtX;\mtS,\stH) = \mtX_{L}$ and $C_{\fnPhi} = \prod_{\ell=1}^{L} C_{\fnH_{\ell}}$, the proof is completed.
\end{proof}

In what follows, we state two Lemmas regarding the Lipschitz continuity of graph filters and GNNs with respect to the support matrix $\mtS$. These results have already been correspondingly proved, and are just rewritten here to unify notation.


\begin{lemma}[Lipschitz continuity of graph filter with respect to $\mtS$]
    \label{lemma:lipschitzFilter}
    Let $\fnH: \fdR^{N \times F} \to \fdR^{N \times G}$ be a graph filter \eqref{eq:graphFilter}. Let $\mtS \in \fdR^{N \times N}$ and $\mthS \in \fdR^{N \times N}$ be two support matrices, such that $\| \mtS - \mthS\|_{2} \leq \sceps$. Then, for any graph signal $\mtX \in \fdR^{N \times F}$ such that $\|\mtX\| < \infty$, it holds that
    \begin{equation} \label{eq:lipschitzFilter}
    \big\| \fnH(\mtX;\mthS,\stH) - \fnH(\mtX;\mtS,\stH) \big\| \leq \sceps(1+ 8 \sqrt{N})\scGamma_{\fnH} \| \mtX\| + \bigOh(\sceps^{2}),
    \end{equation}
    with $\scGamma_{\fnH}$ being the Lipschitz constant filter bank, see \eqref{eq:LipschitzFilters}.
\end{lemma}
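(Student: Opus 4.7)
The plan is to reduce the vector-valued bound to a collection of scalar-valued filter perturbation bounds, exploiting the same decomposition used in the proof of Lemma~\ref{lemma:boundFilter}. Writing $\mtY = \fnH(\mtX;\mtS,\stH)$ and $\mthY = \fnH(\mtX;\mthS,\stH)$, I would expand
\begin{equation*}
    \|\mthY - \mtY\| = \sum_{g=1}^{G} \Big\| \sum_{f=1}^{F} \bigl(\mthH_{fg}(\mthS) - \mtH_{fg}(\mtS)\bigr) \vcx^{f} \Big\|_{2},
\end{equation*}
where $\mtH_{fg}(\mtS) = \sum_{k=0}^{K}[\mtH_{k}]_{fg}\mtS^{k}$. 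Applying the triangle inequality and submultiplicativity of the spectral norm, this reduces the task to controlling $\|\mtH_{fg}(\mthS) - \mtH_{fg}(\mtS)\|_{2}$ for each pair $(f,g)$ and then summing, which will produce the matrix $\mtGamma_{\fnH}$ whose infinity norm yields $\Gamma_{\fnH}$.

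First I would use the perturbation model $\mthS = \mtS + \mtE$ with $\|\mtE\|_{2}\leq \varepsilon$ and diagonalize $\mtS = \mtV \mtLambda \mtV^{\Tr}$ and $\mthS = \mthV \mthLambda \mthV^{\Tr}$. Then, following the standard integral-Lipschitz-filter argument from \cite{Gama20-Stability}, I would split the scalar filter difference into two contributions: a spectral contribution
\begin{equation*}
    \mtV \bigl(\mtH_{fg}(\mthLambda) - \mtH_{fg}(\mtLambda)\bigr) \mtV^{\Tr},
\end{equation*}
and an eigenbasis-misalignment contribution
\begin{equation*}
    \mtH_{fg}(\mthS) - \mtV \mtH_{fg}(\mthLambda) \mtV^{\Tr}.
\end{equation*}
The spectral term is handled by the Lipschitz hypothesis \eqref{eq:LipschitzFilters}: since Weyl's inequality gives $|\hat\lambda_{i}-\lambda_{i}|\leq\varepsilon$, one obtains $\|\mtH_{fg}(\mthLambda)-\mtH_{fg}(\mtLambda)\|_{2}\leq \gamma_{fg}\varepsilon$, contributing the leading ``$1\cdot\Gamma_{\fnH}$'' in the $(1+8\sqrt{N})$ factor.

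For the second term I would use the Davis--Kahan-style estimate that to first order the eigenvector perturbation matrix satisfies $\|\mthV - \mtV\|_{F} \leq 8\sqrt{N}\,\varepsilon/\delta_{\min}$ up to $\bigO(\varepsilon^{2})$ terms. The $8\sqrt{N}$ constant is precisely the one that propagates through; writing the difference $\mtH_{fg}(\mthS) - \mtV\mtH_{fg}(\mthLambda)\mtV^{\Tr}$ in terms of $\mthV-\mtV$ and passing to the Frobenius bound, and then using the Lipschitz property of the filter response a second time to bound the resulting commutator-type expression by $\gamma_{fg}\cdot 8\sqrt{N}\varepsilon$, yields the second contribution. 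Summing the two parts and absorbing the quadratic remainder gives $\|\mtH_{fg}(\mthS) - \mtH_{fg}(\mtS)\|_{2} \leq \gamma_{fg}\varepsilon(1+8\sqrt{N}) + \bigO(\varepsilon^{2})$.

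Finally, plugging this back in and bounding $\sum_{g}\|\cdot\|_{2} \leq \|\mtGamma_{\fnH}\|_{\infty} = \Gamma_{\fnH}$ exactly as in the proof of Lemma~\ref{lemma:boundFilter} collapses the $F$-sum to $\|\mtX\|$ and yields \eqref{eq:lipschitzFilter}. The principal obstacle I anticipate is handling the eigenvector perturbation cleanly: the $8\sqrt{N}$ constant depends on a first-order expansion that is only valid when eigenvalue gaps are nonzero, so the argument formally requires a perturbation-theoretic statement that absorbs gap dependence into the $\bigO(\varepsilon^{2})$ term. I would cite the corresponding result from \cite{Gama20-Stability} to avoid re-deriving this technical estimate, and verify that the scaling of the filter response along the real line (rather than at a single eigenvalue) is what provides the gap-free Lipschitz bound.
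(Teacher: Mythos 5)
The paper does not actually prove this lemma: its ``proof'' is the single line ``See [Thm.\ 1]\cite{Gama20-Stability}'', so your decision to ultimately defer to that reference matches what the paper does, and your outer reduction---splitting the vector-valued filter into the $FG$ scalar filters $\mtH_{fg}(\cdot)$, bounding each $\|\mtH_{fg}(\mthS)-\mtH_{fg}(\mtS)\|_{2}$, and collapsing the sums into $\scGamma_{\fnH}=\|\mtGamma_{\fnH}\|_{\infty}$ and $\|\mtX\|$ exactly as in Lemma~\ref{lemma:boundFilter}---is the right frame for lifting the scalar result to the filter bank.

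Your reconstruction of the scalar bound itself, however, has a genuine gap. You propose to diagonalize both $\mtS$ and $\mthS$, control eigenvalue displacement with Weyl's inequality, and control the eigenbasis change with a Davis--Kahan-type estimate $\|\mthV-\mtV\|_{F}\lesssim 8\sqrt{N}\,\sceps/\delta_{\min}$. That estimate is not gap-free: when eigenvalues of $\mtS$ are close or repeated, $\delta_{\min}\to 0$ and the eigenvector perturbation is \emph{not} $\bigOh(\sceps)$, so the gap dependence cannot be ``absorbed into the $\bigOh(\sceps^{2})$ term''---it multiplies the first-order term and destroys the uniform constant $(1+8\sqrt{N})$. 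The cited proof avoids this entirely by never perturbing the eigenvectors of $\mtS$. It writes $\mthS=\mtS+\mtE$ with $\|\mtE\|_{2}\leq\sceps$, expands the polynomial to first order in $\mtE$ via $\mthS^{k}-\mtS^{k}=\sum_{r=0}^{k-1}\mtS^{r}\mtE\mtS^{k-1-r}+\bigOh(\sceps^{2})$, diagonalizes the \emph{perturbation} $\mtE=\mtU\mtM\mtU^{\Tr}$, and splits $\mtU=\mtV+(\mtU-\mtV)$ against the fixed eigenbasis $\mtV$ of $\mtS$. The aligned part is controlled by the Lipschitz property of the frequency response (the ``$1$'' in $1+8\sqrt{N}$); the misaligned part is controlled by $\|\mtU-\mtV\|_{2}\leq 2$, which holds unconditionally for orthonormal bases and yields the constant $(\|\mtU-\mtV\|_{2}+1)^{2}-1\leq 8$, together with a Frobenius-versus-spectral comparison that produces the $\sqrt{N}$. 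No eigenvalue gap ever enters. If you intend to write the argument out rather than cite it, you need this first-order-in-$\mtE$ expansion, not spectral perturbation theory for $\mthS$.
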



\begin{proof}
See \cite[Thm. 1]{Gama20-Stability}.
\end{proof}


\begin{lemma}[Lipschitz continuity of the GNN with respect to $\mtS$]
    \label{lemma:lipschitzGCNN}
    Let $\fnPhi(\cdot;\cdot,\stH): \fdR^{N \times F} \to \fdR^{N \times G}$ be a GNN \eqref{eq:GCNN} with $L$ layers. Let $\fnsigma(\cdot)$ be such that $|\fnsigma(x) - \fnsigma(y)| \leq \scGamma_{\sigma} |x-y|$ for all $x,y \in \fdR$ for some $\scGamma_{\fnsigma} > 0$, and $\fnsigma(0) = 0$. Let $\mtS \in \fdR^{N \times N}$ and $\mthS \in \fdR^{N \times N}$ be two support matrices such that $\| \mtS - \mthS \|_{2} \leq \sceps$. Then, for every graph signal $\mtX \in \fdR^{N \times F}$ with $\|\mtX\| < \infty$, it holds that
    \begin{equation} \label{eq:lipschitzGCNN}
    \big\| \fnPhi(\mtX;\mthS,\stH) - \fnPhi(\mtX;\mtS,\stH)\big\| \leq\sceps (1+8 \sqrt{N}) \scGamma_{\fnsigma}^{L} C_{\fnPhi} \sum_{\ell=1}^{L} \frac{\scGamma_{\fnH_{\ell}}}{C_{\fnH_{\ell}}} \| \mtX \| + \bigOh(\sceps^{2}),
    \end{equation}
    where $C_{\fnPhi} = \prod_{\ell=1}^{L} C_{\fnH_{\ell}}$ for $C_{\fnH_{\ell}}$ the size of $\ell^{\text{th}}$ filter, see \eqref{eq:graphFilterNorm}, and where $\scGamma_{\fnH_{\ell}}$ is the corresponding Lipschitz constant, see \eqref{eq:graphFilterLipzchitz}.
\end{lemma}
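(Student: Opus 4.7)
The plan is to prove the bound by induction on the layer index $\ell$, establishing a recursive inequality on the per-layer deviation $\delta_\ell = \| \fnPhi_\ell(\mtX;\mthS,\stH) - \fnPhi_\ell(\mtX;\mtS,\stH) \|$, where $\fnPhi_\ell$ denotes the output of the GNN truncated at layer $\ell$ (so $\fnPhi_L = \fnPhi$ and $\delta_0 = 0$). The end goal is to combine Lemma~\ref{lemma:boundFilter}, Lemma~\ref{lemma:boundGCNN}, and Lemma~\ref{lemma:lipschitzFilter} into a single telescoped bound.

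First I would fix an arbitrary layer $\ell$ and write
\begin{equation*}
\fnPhi_\ell(\mtX;\mthS,\stH) - \fnPhi_\ell(\mtX;\mtS,\stH) = \fnsigma\bigl(\fnH_\ell(\mthX_{\ell-1};\mthS,\stH_\ell)\bigr) - \fnsigma\bigl(\fnH_\ell(\mtX_{\ell-1};\mtS,\stH_\ell)\bigr),
\end{equation*}
with $\mthX_{\ell-1} = \fnPhi_{\ell-1}(\mtX;\mthS,\stH)$ and $\mtX_{\ell-1} = \fnPhi_{\ell-1}(\mtX;\mtS,\stH)$. Using the Lipschitz hypothesis on $\fnsigma$ (applied pointwise, which is compatible with the $L_{2,1}$ graph signal norm since $|\fnsigma(x)-\fnsigma(y)| \leq \scGamma_{\fnsigma}|x-y|$), then adding and subtracting the intermediate term $\fnH_\ell(\mthX_{\ell-1};\mtS,\stH_\ell)$ and applying the triangle inequality, I split the deviation into two contributions: (i) a \emph{graph change} term $\|\fnH_\ell(\mthX_{\ell-1};\mthS,\stH_\ell) - \fnH_\ell(\mthX_{\ell-1};\mtS,\stH_\ell)\|$ bounded by Lemma~\ref{lemma:lipschitzFilter} as $\sceps(1+8\sqrt{N})\scGamma_{\fnH_\ell}\|\mthX_{\ell-1}\| + \bigOh(\sceps^2)$, and (ii) a \emph{signal change} term $\|\fnH_\ell(\mthX_{\ell-1};\mtS,\stH_\ell) - \fnH_\ell(\mtX_{\ell-1};\mtS,\stH_\ell)\|$ which, by linearity of the graph filter and Lemma~\ref{lemma:boundFilter} applied to the input $\mthX_{\ell-1} - \mtX_{\ell-1}$, is bounded by $C_{\fnH_\ell}\delta_{\ell-1}$.

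To close the recursion, I still need to control $\|\mthX_{\ell-1}\|$. This is precisely what Lemma~\ref{lemma:boundGCNN} provides when instantiated with the perturbed support $\mthS$: $\|\mthX_{\ell-1}\| \leq \scGamma_{\fnsigma}^{\ell-1}\prod_{\ell'=1}^{\ell-1}C_{\fnH_{\ell'}}\|\mtX\|$ (note that the hypothesis $\fnsigma(0)=0$ combined with Lipschitz continuity yields $|\fnsigma(x)|\leq \scGamma_{\fnsigma}|x|$, so Lemma~\ref{lemma:boundGCNN} applies with $C_{\fnsigma}=\scGamma_{\fnsigma}$). Substituting, the recursion reads
\begin{equation*}
\delta_\ell \leq \scGamma_{\fnsigma}\,\sceps(1+8\sqrt{N})\,\scGamma_{\fnH_\ell}\,\scGamma_{\fnsigma}^{\ell-1}\!\prod_{\ell'=1}^{\ell-1}C_{\fnH_{\ell'}}\|\mtX\| + \scGamma_{\fnsigma}C_{\fnH_\ell}\delta_{\ell-1} + \bigOh(\sceps^2).
\end{equation*}

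Finally, I would unroll this first-order linear recursion from $\delta_0=0$ to $\delta_L$: the homogeneous multiplier accumulates to $\prod_{\ell'=\ell+1}^{L}\scGamma_{\fnsigma}C_{\fnH_{\ell'}}$, and combined with the forcing term at step $\ell$, each summand contributes a factor $\scGamma_{\fnsigma}^{L}\scGamma_{\fnH_\ell}\prod_{\ell'\neq \ell}C_{\fnH_{\ell'}}$. Pulling out the common factor $\scGamma_{\fnsigma}^{L}C_{\fnPhi}=\scGamma_{\fnsigma}^{L}\prod_{\ell'=1}^{L}C_{\fnH_{\ell'}}$ and rewriting $\prod_{\ell'\neq\ell}C_{\fnH_{\ell'}} = C_{\fnPhi}/C_{\fnH_\ell}$ yields the announced expression $\sceps(1+8\sqrt{N})\scGamma_{\fnsigma}^{L}C_{\fnPhi}\sum_{\ell=1}^{L}(\scGamma_{\fnH_\ell}/C_{\fnH_\ell})\|\mtX\|$, with all cross-terms and the accumulated second-order contributions absorbed into $\bigOh(\sceps^2)$. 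The main obstacle is bookkeeping: verifying that telescoping the cross-terms indeed produces the clean $\sum_\ell \scGamma_{\fnH_\ell}/C_{\fnH_\ell}$ pattern, and that the $\bigOh(\sceps^2)$ remainders from applying Lemma~\ref{lemma:lipschitzFilter} at each of the $L$ layers remain uniformly controlled (i.e., depend only on the fixed filter constants and $\|\mtX\|$, not on $\sceps$), so that their aggregation does not inflate the leading-order constant.
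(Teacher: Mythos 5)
Your argument is correct and reconstructs precisely the proof of the result the paper defers to by citation (\cite[Thm.~4]{Gama20-Stability}): a layer-wise recursion on $\delta_\ell$, splitting each step into a graph-perturbation term controlled by Lemma~\ref{lemma:lipschitzFilter} and a signal-propagation term controlled by Lemma~\ref{lemma:boundFilter}, with the intermediate norms bounded via Lemma~\ref{lemma:boundGCNN}. The unrolled constants $\scGamma_{\fnsigma}^{L} C_{\fnPhi} \sum_{\ell} \scGamma_{\fnH_{\ell}}/C_{\fnH_{\ell}}$ come out exactly as stated, so nothing further is needed.
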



\begin{proof}
    See \cite[Thm. 4]{Gama20-Stability}.
\end{proof}



\section{Proof of Closed-Loop Stability} \label{app:stability}

In this appendix, we first prove Theorem~\ref{thm:stability} that gives a sufficient condition for the GNN-controlled system $\stD$ to be stable. We then prove Proposition~\ref{prop:stabilityChange} stating how the stability constant $\scxi$ changes from system $\stD$ to system $\sthD$.


\begin{proof}[Proof of Theorem~\ref{thm:stability}]
The system dynamics with a GNN-based, exploratory controller given by $\mtU(t) = \fnPhi(\mtX(t);\mtS,\stH) + \mtE(t)$ are
\begin{equation}
    \mtX(t) = \mtA \mtX(t-1) \mtbA + \mtB \fnPhi(\mtX(t-1)) \mtbB + \mtB \mtE(t-1) \mtbB.
\end{equation}
The graph signal norm of the trajectory can be bounded by applying the triangular inequality as follows:
\begin{align} \label{eq:ineqX}
    \| \mtX&(t)\| \leq \| \mtA \|_{2} \|\mtbA\|_{\infty} \| \mtX(t-1) \| \\ & + \| \mtB \|_{2} \|\mtbB\|_{\infty} \| \fnPhi(\mtX(t-1)) \| + \| \mtB \|_{2} \|\mtbB\|_{\infty} \| \mtE(t-1) \|. \nonumber
\end{align}
The term $\|\fnPhi(\mtX(t);\mtS,\stH)\|$ can be bounded by leveraging Lemma~\ref{lemma:boundGCNN} on the bound of the output of a GNN as
\begin{equation} \label{eq:ineqU}
    \|\mtU(t)\| = \big\| \fnPhi \big( \mtX(t); \mtS, \stH \big) \big\| \leq C_{\fnPhi} \| \mtX(t)\|,
\end{equation}
with $C_{\fnsigma}=1$. This result is used in \eqref{eq:ineqX}, to yield
\begin{equation} \label{eq:ineqAllX}
    x_{t} \leq \scxi x_{t-1} + \beta e_{t-1},
\end{equation}
where $x_{t} = \| \mtX(t)\|$, $\scxi=\|\mtA\|_{2}\|\mtbA\|_{\infty} + C_{\fnPhi} \|\mtB\|_{2} \|\mtbB\|_{\infty}$ is given in \eqref{eq:stabilityConstant},  $\beta = \| \mtB \|_{2} \|\mtbB\|_{\infty}$ and $e_{t} = \| \mtE(t)\|$. By repeatedly applying \eqref{eq:ineqAllX}, the following inequality is obtained:
\begin{equation} \label{eq:ineqXinit}
    x_{t} \leq \scxi^{t} x_{0} + \beta \sum_{\tau = 0}^{t-1} \scxi^{\tau} e_{t-\tau -1}.
\end{equation}

Now, considering the summation series that defines the stability as in \eqref{eq:stability}, one obtains:
\begin{equation} \label{eq:L2conditionSufficient}
    \sum_{t=0}^{\infty}x_{t} \leq x_{0} \sum_{t=0}^{\infty}\scxi^{t} + \scbeta \sum_{t=0}^{\infty} \sum_{\tau=0}^{t-1} \scxi^{\tau} e_{t-\tau-1}.
\end{equation}
Leveraging the assumptions that $\scxi < 1$ and $\sum_{t=0}^{\infty} e_{t} < \infty$, the above inequality yields
\begin{equation} \label{eq:stabilityConditionNoDivision}
    \sum_{t=0}^{\infty} x_{t} \leq \frac{x_{0}}{1-\scxi} + \frac{\beta}{1-\scxi} \sum_{t=0}^{\infty} e_{t},
\end{equation}
where the fact that, under these assumptions, it holds that $\sum_{t=0}^{\infty} \sum_{\tau=0}^{t-1} \scxi^{\tau} e_{t-\tau-1} \leq (\sum_{t=0}^{\infty} e_{t})(\sum_{t=0}^{\infty} \scxi^{t})$ was used. The proof is complete by replacing the definitions of $x_{t}$, $e_{t}$ and $\beta$ in \eqref{eq:stabilityConditionNoDivision}. Thus, the system is input-state stable with constants $\beta_{0} = \|\mtX(0)\|/(1-\scxi)$ and $\beta_{1} = \|\mtB\|_{2} \|\mtbB\|_{\infty}/(1-\scxi)$.
\end{proof}


Next, we prove the change in the stability constant when $\fnd(\stD, \sthD) = \sceps$.

\begin{proof}[Proof of Proposition~\ref{prop:stabilityChange}]
    Start by writing the stability constant $\schxi = \scxi(\sthD,\stH)$ as given by \eqref{eq:stabilityConstant} to obtain
    \begin{equation} \label{eq:schxi}
        \schxi = \schxi(\sthD,\stH) = \| \mthA\|_{2} \| \mthbA\|_{\infty} + C_{\fnPhi} \| \mthB\|_{2} \|\mthbB\|_{\infty}.
    \end{equation}
    This equation is equivalent to
    \begin{equation} \label{eq:schxiscxi}
    \begin{aligned}
        \schxi & = \| \mthA\|_{2} \| \mthbA\|_{\infty} - \| \mtA\|_{2} \| \mthA\|_{\infty} \\ & \quad + C_{\fnPhi} \Big( \| \mthB\|_{2} \|\mthbB\|_{\infty} - \| \mtB\|_{2} \|\mtbB\|_{\infty} \Big) + \scxi.
    \end{aligned}
    \end{equation}
    The first term can be rewritten as
    \begin{align} \label{eq:boundAnorm}
        \| & \mthA\|_{2} \| \mthbA\|_{\infty} - \| \mtA\|_{2} \|  \mtbA\|_{\infty} \\
        & = \big( \| \mthA\|_{2} - \|\mtA\|_{2} \big) \|\mthbA\|_{\infty} + \|\mtA\|_{2} \big(\| \mthbA\|_{\infty} - \|\mtbA\|_{\infty}\big). \nonumber
    \end{align}
    From the definition of the distance $\fnd(\stD,\sthD) = \sceps$ it is known that $- \sceps \leq \| \mthA \|_{2} - \|\mtA\|_{2} \leq \sceps$, and analogously for $\|\mtbA\|_{\infty}$, so that \eqref{eq:boundAnorm} can be bounded by
    \begin{equation} \label{eq:boundA}
        \| \mthA\|_{2} \| \mthbA\|_{\infty} - \| \mtA\|_{2} \| \mthA\|_{\infty}    \leq \sceps \big(\| \mtA \|_{2} + \|\mthbA\|_{\infty} \big).
    \end{equation}
    Following the same reasoning for the control matrices, one obtains:
    \begin{equation} \label{eq:boundB}
        \| \mthB\|_{2} \| \mthbB\|_{\infty} - \| \mtB\|_{2} \|  \mthB\|_{\infty} \leq \sceps \big(\| \mtB \|_{2} + \|\mthbB\|_{\infty}  \big).
    \end{equation}
    By substituting \eqref{eq:boundA} and \eqref{eq:boundB} into \eqref{eq:schxiscxi} and defining $\schC_{\scxi} = \| \mtA\|_{2} + \| \mthbA\|_{\infty} + \scC_{\fnPhi} (\| \mtB\|_{2} + \| \mthbB\|_{\infty})$, the proof is complete.
\end{proof}



\section{Proof of Trajectory Deviations} \label{app:robust}

In this appendix, Theorem~\ref{thm:robust} bounding the trajectory deviation between systems $\stD$ and $\sthD$ is proved. Then, Corollary~\ref{cor:robustStable} that considers the special case when both $\stD$ and $\sthD$ are input-state stable is also proved.


\begin{proof}[Proof of Theorem~\ref{thm:robust}]
The dynamic of the error graph signal $\mtX(t) - \mthX(t)$ is given by
\begin{equation} \label{eq:errorDynamics}
\begin{aligned}
    \mtX(t) - \mthX(t) = & \mtA \mtX(t-1) \mtbA - \mthA \mthX(t-1) \mthbA \\
    & + \mtB \mtU(t-1) \mtbB - \mthB \mthU(t-1) \mthbB.
\end{aligned}
\end{equation}
The evolution of $\mtX(t)$ and $\mthX(t)$ and that of $\mtU(t)$ and $\mthU(t)$ are studied separately.

To study the first part of the right-hand side of \eqref{eq:errorDynamics}, one can write:
\begin{align} \label{eq:Xsum3terms}
    & \mtA \mtX(t-1) \mtbA  - \mthA \mthX(t-1) \mthbA =  \mtA \mtX(t-1) \big( \mtbA - \mthbA \big)  \\
    &\quad + \big( \mtA - \mthA \big)  \mtX(t-1) \mthbA + \mthA \big( \mtX(t-1) - \mthX(t-1) \big) \mthbA .\nonumber
\end{align}
Observe that \eqref{eq:Xsum3terms} consists of three terms containing each of the errors between system matrices and states. Computing the size of the graph signal in \eqref{eq:Xsum3terms}, see \eqref{eq:graphSignalNorm}, and applying the triangular inequality for each of the three terms, one obtains:
\begin{equation}
    \begin{aligned}
        \big\| & \mtA \mtX(t-1) \mtbA  - \mthA \mthX(t-1) \mthbA \big\| \\
        & \leq \|\mtA\|_{2} \sum_{f=1}^{F} \big\| \vcx^{f}(t-1)\big\|_{2} \sum_{g=1}^{F} \big|[\mtbA]_{fg} - [\mthbA]_{fg} \big|\\
        & \ +\big\|\mtA -\mthA \big\|_{2} \sum_{f=1}^{F}    \big\| \vcx^{f}(t-1) \big\|_{2} \sum_{g=1}^{F} \big| [\mthbA]_{fg} \big|\\
        & \ + \|\mthA\|_{2} \sum_{f=1}^{F} \Big\| \vcx^{f}(t-1)-\vchx^{f}(t-1) \Big\|_{2} \sum_{g=1}^{F} \big|[\mthbA]_{fg}\big| .
    \end{aligned}
\end{equation}
Now, using the bound $\sum_{g=1}^{F} | [\mthbA]_{fg} | \leq \max_{f} \sum_{g=1}^{F} | [\mthbA]_{fg}| = \| \mthbA\|_{\infty}$, and analogously for $(\mtbA-\mthbA)$, one can write:
\begin{align} \label{eq:boundErrorX}
    & \big\|  \mtA \mtX(t-1) \mtbA  - \mthA \mthX(t-1) \mthbA \big\| \nonumber \\
    &\leq \Big( \|\mtA\|_{2} \|\mtbA - \mthbA \|_{\infty} + \|\mtA -\mthA \|_{2} \|\mthbA\|_{\infty} \Big)  \|\mtX(t-1)\| \nonumber  \\
    &\quad + \|\mthA\|_{2} \|\mthbA\|_{\infty} \|\mtX(t-1) - \mthX(t-1)\| ,
\end{align}
where the resulting sum over $f$ has been replaced for the corresponding size of the graph signal, see \eqref{eq:graphSignalNorm}.

Proceed analogously to \eqref{eq:boundErrorX}, the second term in the right-hand side of \eqref{eq:errorDynamics} can be bounded as
\begin{align}\label{eq:boundErrorU}
    &\big\| \mtB \mtU(t-1) \mtbB  - \mthB \mthU(t-1) \mthbB \big\| \nonumber \\
    & \leq \Big( \|\mtB\|_{2} \|\mtbB - \mthbB \|_{\infty} + \|\mtB -\mthB \|_{2} \|\mthbB\|_{\infty} \Big)  \|\mtU(t-1)\| \nonumber \\
    & \quad + \|\mthB\|_{2} \|\mthbB\|_{\infty} \|\mtU(t-1) - \mthU(t-1)\|.
\end{align}
The control term $\|\mtU(t)\|$ is a GNN with input $\mtX(t)$ and can thus be bounded by leveraging Lemma~\ref{lemma:boundGCNN}, i.e. $\|\mtU(t) \| \leq C_{\fnPhi}\|\mtX(t)\|$. To bound $\|\mtU(t) - \mthU(t)\|$, $\fnPhi(\mtX(t);\mthS,\stH)$ is added and subtracted, and the size of the graph signal computed, to obtain
\begin{align} \label{eq:boundErrorUsplit}
    \big\| \mtU(t) - \mthU(t) \big\| = &\ \big\|\fnPhi\big(\mtX(t); \mtS, \stH\big) -\fnPhi\big(\mthX(t); \mthS, \stH\big) \big\| \nonumber \\
    &\ \leq \big\|\fnPhi\big(\mtX(t); \mtS, \stH\big) - \fnPhi\big(\mtX(t); \mthS, \stH\big)\big\|\\
    &\ \quad + \big\| \fnPhi\big(\mtX(t); \mthS, \stH\big) -\fnPhi\big(\mthX(t); \mthS, \stH\big) \big\| ,\nonumber
\end{align}
where the triangular inequality was used. For the first term in \eqref{eq:boundErrorUsplit}, it follows from Lemma~\ref{lemma:lipschitzGCNN} that:
\begin{equation} \label{eq:boundUS}
    \big\| \fnPhi(\mtX(t);\mtS, \stH) - \fnPhi(\mtX(t);\mthS, \stH)\big\|  \leq \scGamma(\sceps) \scGamma_{\fnPhi} \| \mtX(t) \|,
\end{equation}
where $\scGamma(\sceps) = (1+8\sqrt{N})\sceps$ with $\sceps = \fnd(\stD, \sthD)$ depends on the characteristics of the support matrices $\mtS$ and $\mthS$, and where $\scGamma_{\fnPhi} = C_{\fnPhi} \sum_{\ell=1}^{L} \scGamma_{\fnH_{\ell}}/C_{\fnH_{\ell}}$ depends on the learned filters $\fnH_{\ell}(\cdot;\cdot,\stH)$. To bound the second term in \eqref{eq:boundErrorUsplit}, recall that the output of a GNN is its value at the last layer
\begin{equation} \label{eq:boundErrorUsplit2ndTerm}
\begin{aligned}
    \big\| \fnPhi\big(\mtX(t); \mthS, \stH\big) & -\fnPhi\big(\mthX(t); \mthS, \stH\big) \big\|  = \big\| \mtX_{L} - \mtX_{L} \big\| \\
    & = \big\| \fnsigma \big( \fnH_{L}(\mtX_{L-1};\mthS,\stH) \big) - \fnsigma \big( \fnH_{L}(\mthX_{L-1};\mthS,\stH) \big) \big\|.
\end{aligned}
\end{equation}
Using the assumption that $|\fnsigma(x) - \fnsigma(y)| \leq |x-y|$ for all $x,y \in \fdR$, \eqref{eq:boundErrorUsplit2ndTerm} can be upper bounded by
\begin{equation} \label{eq:boundErrorUsplit2ndTermNoSigma}
    \big\| \fnPhi\big(\mtX(t); \mthS, \stH\big) -\fnPhi\big(\mthX(t); \mthS, \stH\big) \big\| \leq  \big\| \fnH_{L}(\mtX_{L-1}-\mthX_{L-1};\mthS,\stH) \big\|.
\end{equation}
where the linearity of the filter with respect to the input $\mtX_{L-1}$ was used. Leveraging Lemma~\ref{lemma:boundFilter} on the upper bound of a graph filter, one obtains:
\begin{equation} \label{eq:boundErrorUsplit2ndTermNextLayer}
    \big\| \fnH_{\ell}(\mtX_{L-1}-\mthX_{L-1};\mthS,\stH) \big\| \leq C_{\fnH_{L}} \big\| \mtX_{L-1} - \mthX_{L-1} \big\|.
\end{equation}
Repeatedly applying \eqref{eq:boundErrorUsplit2ndTermNoSigma} and \eqref{eq:boundErrorUsplit2ndTermNextLayer}, the following upper bound on the second term of \eqref{eq:boundErrorUsplit} is obtained:
\begin{equation} \label{eq:boundUX}
    \big\| \fnPhi\big(\mtX(t); \mthS, \stH\big) -\fnPhi\big(\mthX(t); \mthS, \stH\big) \big\| \leq \Big( \prod_{\ell=1}^{L} C_{\fnH_{\ell}} \Big) \big\| \mtX_{0} - \mthX_{0} \big\| = C_{\fnPhi} \big\| \mtX(t) - \mthX(t) \big\|,
\end{equation}
where the fact that the input to the GNN is the state at time $t$, i.e. $\mtX_{0} = \mtX(t)$. Finally, using \eqref{eq:boundUS} and \eqref{eq:boundUX} in \eqref{eq:boundErrorUsplit}, one obtains:
\begin{equation}
    \big\|\mtU(t) - \mthU(t)\big\| \leq  \scGamma(\sceps) \scGamma_{\fnPhi} \big\| \mtX(t)\big\| + C_{\fnPhi} \| \mtX(t) - \mthX(t) \|. \nonumber
\end{equation}
This simplifies \eqref{eq:boundErrorU} as
\begin{align} \label{eq:boundErrorUfull}
    &\big\| \mtB \mtU(t-1) \mtbB  - \mthB \mthU(t-1) \mthbB \big\|  \\
    &\leq \Big(\|\mtB\|_{2} \|\mtbB - \mthbB \|_{\infty} + \|\mtB -\mthB \|_{2} \|\mthbB\|_{\infty} \Big)  C_{\fnPhi}\|\mtX(t-1)\| \nonumber \\
    &\quad + \|\mthB\|_{2} \|\mthbB\|_{\infty} \scGamma(\sceps) \scGamma_{\fnPhi} \big\| \mtX(t-1)\big\| \nonumber \\
    &\quad + \|\mthB\|_{2} \|\mthbB\|_{\infty} C_{\fnPhi} \| \mtX(t-1) - \mthX(t-1) \|. \nonumber
\end{align}

Now, computing the size of the error signal in \eqref{eq:errorDynamics} and using the triangular inequality, together with \eqref{eq:boundErrorX} and \eqref{eq:boundErrorUfull}, one obtains:
\begin{align} \label{eq:boundXt}
    \big\| & \mtX(t) - \mthX(t) \big\|\leq \Big( \|\mthA\|_{2} \|\mthbA\|_{\infty} + \|\mthB\|_{2} \|\mthbB\|_{\infty} C_{\fnPhi} \Big) \| \mtX(t-1) - \mthX(t-1) \| \nonumber \\
    &  + \Big( \big( \|\mtA\|_{2} \|\mtbA - \mthbA \|_{\infty} + \|\mtA -\mthA \|_{2} \|\mthbA\|_{\infty} \big)  \\
    & \qquad \qquad + C_{\fnPhi} \big(\|\mtB\|_{2} \|\mtbB - \mthbB \|_{\infty} + \|\mtB -\mthB \|_{2} \|\mthbB\|_{\infty} \big) \Big)  \|\mtX(t-1)\| \nonumber  \\
    &+ \|\mthB\|_{2} \|\mthbB\|_{\infty} \scGamma(\sceps) \scGamma_{\fnPhi} \big\| \mtX(t-1)\big\|. \nonumber
\end{align}
Recall that $\schxi = \|\mthA\|_{2} \|\mthbA\|_{\infty} + C_{\fnPhi} \|\mthB\|_{2} \|\mthbB\|_{\infty}$ and note that
\begin{equation} \label{eq:horribleCxi}
\big( \|\mtA\|_{2} \|\mtbA - \mthbA \|_{\infty} + \|\mtA -\mthA \|_{2} \|\mthbA\|_{\infty} \big) + C_{\fnPhi} \big(\|\mtB\|_{2} \|\mtbB - \mthbB \|_{\infty} + \|\mtB -\mthB \|_{2} \|\mthbB\|_{\infty} \big) \leq \schC_{\scxi} \sceps
\end{equation}
for $\schC_{\scxi}$ as in \eqref{eq:Cxi}. The value of $\|\mtX(t-1)\|$ can be further bounded as
\begin{equation}
    \|\mtX(t-1)\| \leq \big( \| \mtA \|_{2} \|\mtbA\|_{\infty} + C_{\fnPhi} \| \mtB \|_{2} \| \mtbB\|_{\infty} \big) \| \mtX(t-2) \|.
\end{equation}
Repeatedly applying this inequality, and noting that $\scxi = \| \mtA \|_{2} \|\mtbA\|_{\infty} + C_{\fnPhi} \| \mtB \|_{2} \| \mtbB\|_{\infty}$, see \eqref{eq:stabilityConstant}, the bound on $\|\mtX(t-1)\|$ becomes
\begin{equation} \label{eq:boundX0tm1}
    \|\mtX(t-1)\| \leq \scxi^{t-1} \| \mtX(0) \|.
\end{equation}
Using \eqref{eq:horribleCxi} and \eqref{eq:boundX0tm1} back in \eqref{eq:boundXt}, one obtains:
\begin{equation}
\big\|  \mtX(t) - \mthX(t) \big\|\leq \schxi\ \| \mtX(t-1) - \mthX(t-1) \|  + \big(  \schC_{\scxi} \sceps + C_{\fnPhi} \| \mthB\|_{2} \|\mthbB\|_{\infty} \scGamma(\sceps) \scGamma_{\fnPhi} \big)  \|\mtX(0)\| \scxi^{t-1},
\end{equation}
which can be conveniently rewritten as
\begin{equation} \label{eq:boundErrorIneq}
e_{t} \leq \schxi e_{t-1} +  b\sceps \scxi^{t-1},
\end{equation}
with
\begin{IEEEeqnarray}{CL} \IEEEyesnumber \label{eq:boundErrorCoeff}
    e_{t} & = \| \mtX(t) - \mthX(t)\|, \IEEEyessubnumber \label{subeq:boundErrorCoeffE} \\
    \schxi &= \|\mthA\|_{2} \|\mthbA\|_{\infty}+C_{\fnPhi}^{L} \|\mthB\|_{2} \|\mthbB\|_{\infty} , \IEEEyessubnumber \label{subeq:boundErrorCoeffAhat} \\
    \scxi &= \| \mtA\|_{2} \| \mtbA\|_{\infty} + C_{\fnPhi}^{L} \| \mtB\|_{2} \| \mtbB\|_{\infty} ,  \IEEEyessubnumber \label{subeq:boundErrorCoeffA} \\
    b & = \big( \schC_{\scxi} + C_{\fnPhi} \| \mthB\|_{2} \|\mthbB\|_{\infty} (1+8\sqrt{N})\scGamma_{\fnPhi} \big) \| \mtX(0) \| ,\IEEEyessubnumber \label{subeq:boundErrorCoeffB}
\end{IEEEeqnarray}
where the definition of $\scGamma(\sceps) = (1+8\sqrt{N}) \sceps$ was used to highlight the linearity with $\sceps$. By repeatedly applying \eqref{eq:boundErrorIneq}, one arrives at:
\begin{equation} \label{eq:boundErrorParams}
e_{t} \leq b\sceps \sum_{\tau = 0}^{t-1} \scxi^{t-\tau-1} \schxi^{\tau} + \scxi^{t} e_{0}.
\end{equation}
Since the initial state of both the true system and the estimated one is the same, it holds that $e_{0} = \| \mtX(0) - \mthX(0)\| = 0$. Then, \eqref{eq:boundErrorParams} becomes
\begin{equation} \label{eq:boundErrorParamsNoSum}
e_{t} \leq b\sceps \sum_{\tau = 0}^{t-1} \scxi^{t-\tau-1} \schxi^{\tau} =
\begin{cases}
b\frac{\scxi^{t} - \schxi^{t}}{\scxi - \schxi} & \text{ if } \scxi \neq \schxi \\
bt\scxi^{t-1} & \text{ if } \scxi = \schxi
\end{cases}.
\end{equation}
Now, recall that $|\scxi^{t} - \schxi^{t}| \leq t \max \{\scxi, \schxi\}^{t} | \scxi - \schxi|$ so that \eqref{eq:boundErrorParamsNoSum} becomes $e_{t} \leq bt \max\{\scxi, \schxi\}^{t-1} \sceps$. Finally, substituting the definitions of $e_{t}$ as in \eqref{subeq:boundErrorCoeffE}, $\schxi$ as in \eqref{subeq:boundErrorCoeffAhat}, $\scxi$ as in \eqref{subeq:boundErrorCoeffA}, and $b$ as in \eqref{subeq:boundErrorCoeffB}, we complete the proof.

\end{proof}


Now we prove Corollary~\ref{cor:robustStable} for the particular case when both systems $\stD$ and $\sthD$ are input-state stable.

\begin{proof}[Proof of Corollary~\ref{cor:robustStable}]
From \eqref{eq:Ct} in Theorem~\ref{thm:robust} it holds that $\schC_{t} = t \max\{\scxi, \schxi\}^{t-1}$. By assumption, it is known that $\scxi<1$ and $\schxi < 1$. Therefore, the function $t \max\{\scxi, \schxi\}^{t-1}$ has a global maximum for $t \geq 0$. As a function of continuous $t \in \fdR$, this maximum is at $t = -1/\log(\max\{\scxi, \schxi\})$ and gives the optimal value $-e^{-1}/(\max\{\scxi, \schxi\} \times \log(\max\{\scxi, \schxi\}))$. Thus, it holds that $\schC_{t} \leq -e^{-1} \schC_{\fnPhi}/(\max\{\scxi, \schxi\} \times \log(\max\{\scxi, \schxi\}))$, completing the first part of the proof. For the second part, note that, since $\scxi < 1$ and $\schxi < 1$, then it holds that $\lim_{t \to \infty} t \max\{\scxi, \schxi\}^{t-1} = 0$.
\end{proof}




\bibliographystyle{bibFiles/IEEEtranD}

%

\bibliography{bibFiles/myIEEEabrv,bibFiles/biblioDistributedLQR}

\end{document}